\renewcommand{\ge}{\geqslant}
\renewcommand{\le}{\leqslant}
\renewcommand{\epsilon}{\varepsilon}
\newcommand{\eps}{\epsilon}
\newcommand{\cL}{\mathcal L}
\newcommand{\cX}{\mathcal X}
\newcommand{\cY}{\mathcal Y}
\newcommand{\cZ}{\mathcal Z}
\newcommand{\Psymb}{{\bf Pr}}
\DeclareMathOperator*{\ProbOp}{\Psymb}
\renewcommand{\Pr}{\ProbOp}
\renewcommand{\[}{\begin{equation}}
\renewcommand{\]}{\end{equation}}
\providecommand{\ext}{{ext}}
\DeclareMathOperator{\e}{e}
\DeclareMathOperator*{\Ex}{\mathbb{E}}
\DeclareMathOperator{\AND}{AND}
\DeclareMathOperator{\OR}{OR}
\DeclareMathOperator{\IC}{IC}
\DeclareMathOperator{\CI}{CI}
\DeclareMathOperator{\entropy}{H}  
\DeclareMathOperator{\MI}{I}  
\DeclareMathOperator{\Dvg}{D}  
\DeclareMathOperator{\supp}{supp}
\providecommand{\concav}{\mathrm{concav}}
\spnewtheorem{assumption}{Assumption}{\bfseries}{\itshape}
\spnewtheorem{statement}{Statement}{\bfseries}{\itshape}
\spnewtheorem{myclaim}{Claim}{\bfseries}{\itshape}
\begin{document}

\title{Information complexity of the AND function in the two-party, and multiparty settings}

\author{
Yuval Filmus\inst{1} 
\and Hamed Hatami\thanks{Supported by an NSERC grant.}\inst{2}
\and Yaqiao Li\inst{3}
\and Suzin You\inst{4} 
}

\institute{
Technion --- Israel Institute of Technology, \email{yuvalfi@cs.technion.ac.il}
\and
McGill University, \email{hatami@cs.mcgill.ca}
\and
McGill University, \email{yaqiao.li@mail.mcgill.ca}
\and
McGill University, \email{suzinyou.sy@gmail.com}
}

\maketitle

\begin{abstract}
In a recent breakthrough paper [M. Braverman, A. Garg, D.  Pankratov, and O. Weinstein, From information to exact communication, STOC'13 Proceedings  of  the  2013  ACM  Symposium  on  Theory  of Computing, ACM, New York, 2013, pp. 151--160.] Braverman {\it et al.}\ developed a local characterization for the zero-error information complexity in the two party model, and used it to compute the exact internal and external information complexity of the $2$-bit AND function.

In this article,  we extend their result to the multiparty number-in-hand model by proving that the generalization of their protocol has optimal internal and external information cost for certain distributions. Our proof has new components, and in particular it fixes some minor gaps in the proof of Braverman {\it et al}. 
\end{abstract}

\section{Introduction}

Although  communication complexity has since its birth been witnessing steady and rapid progress, it was not until recently that a focus on an information-theoretic approach resulted in new and deeper understanding of some of the classical problems in the area. This gave birth to a new area of complexity theory called \emph{information complexity}. Recall that communication complexity is concerned with minimizing the amount of communication required for  players who wish to evaluate a function that depends on their private inputs. Information complexity, on the other hand, is concerned with the amount of information that the communicated bits reveal about the inputs of the players to each other, or to an external observer.

One of the important achievements of information complexity is the recent result of~\cite{MR3210776} that  determines the exact asymptotics of the randomized communication complexity of one of the oldest and most studied problems in communication complexity,  set disjointness:
\begin{equation}
\label{eq:DISJ}
\lim_{\epsilon \to 0} \lim_{n\to\infty} \frac{R_\epsilon(\mathrm{DISJ}_n)}{n} \approx 0.4827.
\end{equation}
Here $R_\epsilon(\cdot)$ denotes the randomized communication complexity with an error of at most $\epsilon$ on every input, and $\mathrm{DISJ}_n$ denotes the set disjointness problem. In this problem, Alice and Bob each receive a subset of $\{1,\ldots,n\}$, and their goal is to determine whether their sets are disjoint or not.   Prior to the discovery of these information-theoretic techniques, proving the lower bound $R_\epsilon(\mathrm{DISJ}_n) = \Omega(n)$ had already been a challenging problem, and even Razborov's~\cite{MR1192778} short proof of that fact is intricate and sophisticated.

Note that the set disjointness function is nothing but an $\OR$ of $\AND$ functions. More precisely, for $i=1,\ldots,n$, if $x_i$ is the Boolean variable which represents whether $i$ belongs to Alice's set or not, and   $y_i$ is the corresponding variable for Bob, then $\bigvee_{i=1}^n (x_i \wedge y_i)$ is true if and only if  Alice's input intersects Bob's input. Braverman {\it et al.}~\cite{MR3210776} exploited this fact to prove \eqref{eq:DISJ}. Roughly speaking, they first determined the exact information cost of the  $2$-bit AND function for any underlying distribution $\mu$ on the set of inputs $\{0,1\} \times \{0,1\}$, and then used the fact that amortized communication equals information cost~\cite{MR3265014} to relate this to the communication complexity of the set disjointness problem.  The constant $0.4827$ in (\ref{eq:DISJ}) is indeed the maximum of the information complexity of the $2$-bit AND function over all
measures $\mu$ that assign a zero  mass to $(1,1) \in \{0,1\} \times \{0,1\}$. That is
$$
\max_{\mu:\mu(11)=0} \IC_\mu(\AND) \approx 0.4827,
$$
where $\IC_\mu(\AND)$ denotes the information cost of the $2$-bit $\AND$ function with respect to the distribution $\mu$ with no error (See Definition~\ref{def:infocost} below). These results show the importance of knowing the exact information complexity of simple functions such as the $\AND$ function.

Although obtaining the asymptotics of $R_\epsilon(\mathrm{DISJ}_n)$ from  the information complexity of the $\AND$ function is not straightforward and a formal proof requires overcoming some technical difficulties, the bulk of~\cite{MR3210776} is dedicated to computing the exact information complexity of the $2$-bit AND function.  This rather simple-looking problem had been studied previously by Ma and Ishwar~\cite{MaIshwar2011,MaIshwar2013}, and some of the key ideas of~\cite{MR3210776} originate  from their work. In~\cite{MR3210776} Braverman {\it et al}  introduced a protocol to solve the $\AND$  function, and proved that it has optimal internal and external information cost. Interestingly this protocol is not a conventional communication protocol as it has access to a continuous clock, and the players are allowed to ``buzz'' at randomly chosen times. However, one can approximate it by conventional  communication protocols through dividing the time into
finitely many discrete units. Indeed, it is known~\cite{MR3210776} that no protocol with a bounded number of rounds can have optimal information cost for the $\AND$ function, and hence the infinite number of rounds, implicit in the continuous clock, is essential. We shall refer to this protocol as the \emph{buzzers} protocol.

\subsection{Our contributions}

{\bf Fixing the argument of \cite{MR3210776}:}
In order to show that the \emph{buzzers} protocol has optimal information cost, inspired by the work of Ma and Ishwar~\cite{{MaIshwar2011,MaIshwar2013}}, Braverman {\it et al} came up with a local concavity condition, and showed that if a protocol satisfies this condition, then it has optimal information cost. This condition, roughly speaking, says that it suffices to verify that one does not gain any advantage over the conjectured optimal protocol if one of the players starts by sending a bit $B$. In the original paper~\cite{MR3210776}, it is claimed that it suffices to verify this condition only for signals $B$ that reveal arbitrarily small information about the inputs. As we shall see, however, this is not true, and one can easily construct counter-examples to this statement. 

In Theorem~\ref{thm:LocalCharNoErrorWeak} we prove a variant of the local concavity condition that allows one to consider only signals $B$ with small information leakage, and then apply it to fix the argument in \cite{MR3210776}. We have been informed through private communication that Braverman {\it et al} have also independently fixed the argument in \cite{MR3210776}.

\smallskip
{\bf \noindent Extension of \cite{MR3210776} to the multi-party setting:} We then apply Theorem~\ref{thm:LocalCharNoErrorWeak} to extend the result of~\cite{MR3210776}  to the multiparty number-in-hand model by defining a generalization of the \emph{buzzers} protocol, and then prove in Theorem~\ref{thm:mainMultiparty} that it has optimal internal and external information cost when the underlying distribution satisfies the following assumption:
\begin{assumption}
\label{assumption}
The support of $\mu$ is a subset of $\{\vec{0},\vec{1},\e_1,\ldots,\e_k\}$, where $\e_i$ is the usual $i^\text{th}$ basis vector $(0,\dots,0,1,0,\dots,0)$.
\end{assumption}
Note that in the two-party setting, every distribution satisfies this assumption and thus our results are complete generalizations of the results of~\cite{MR3210776} in the two party setting. The distribution in Assumption~\ref{assumption} arise naturally in the study of the set disjointness problem, and as a result they have been considered previously in~\cite{specialdistribution}.

This extension  is not  straightforward since in \cite{MR3210776}, a large part of the calculations for verifying the local concavity conditions  are carried out by the software Mathematica.  However, in the number-in-hand model, having an arbitrary  number of players, one cannot simply rely on a computer program for those calculations. Instead, first we had to analyze and understand what happens at different stages of the protocol, and once we reduced the problem to sufficiently simple  equations (with a constant number of variables), then we used a computer program to verify them. We believe our proof provides some new insights even for the two-party setting.

\section{Preliminaries}
\subsection{Notation}
We typically denote the random variables by capital letters (e.g $A,B,C,X,Y,\Pi$). For the sake of brevity, we shall write $A_1\ldots A_n$ to denote the random variable $(A_1,\ldots,A_n)$ and \emph{not} the product of the $A_i$'s. We use $[n]$ to denote the set $\{1,\ldots,n\}$, and $\supp(\mu)$ to denote the support of a measure $\mu$. We denote the statistical distance (a.k.a. total variation distance) of two measures $\mu$ and $\nu$ on the sample space $\Omega$ by $|\mu-\nu| := \frac{1}{2}\sum_{a \in \Omega} |\mu(a)-\nu(a)|$.

For every $\epsilon \in [0,1]$, $\entropy(\epsilon) = -\epsilon \log\epsilon - (1-\epsilon) \log(1-\epsilon)$ denotes the binary entropy, where here and throughout the paper $\log(\cdot)$ is in base $2$, and $0 \log 0 = 0$. 

Recall $\Dvg(\mu || \nu)$ means the divergence (a.k.a. relative entropy, or Kullback Leibler distance) between two distributions $\mu$ and $\nu$. Let $X$ and $Y$ be two random variables,  the standard notation $\MI(X,Y)$ means the mutual information between $X$ and $Y$, sometimes we use the notation $\Dvg(X||Y)$ to denote the divergence between the distributions of $X$ and $Y$. For definitions and basic facts regarding divergence and mutual information, see~\cite{cover2012elements}.

\subsection{Communication complexity}
The notion of two-party communication complexity was introduced by Yao~\cite{Yao:1979} in 1979. In this model there are two players (with unlimited computational power), often called Alice and Bob, who wish to collaboratively compute a given function $f\colon \cX \times \cY \to \cZ$. Alice receives an input $x \in \cX$ and Bob receives $y \in \cY$. Neither of them knows the other player's input, and they wish to communicate in accordance with an agreed-upon protocol $\pi$ to  compute $f(x,y)$.  The protocol $\pi$ specifies as a function of (only) the transmitted bits  whether the communication is over, and if not, who sends the next bit. Furthermore $\pi$ specifies what the next bit must be as a function of the transmitted bits, and the input of the player who sends the bit. The \emph{cost} of the protocol is the total number of bits transmitted on the worst case input. The \emph{transcript} $\Pi$ of a protocol $\pi$ is the list of all the transmitted bits during the execution of the protocol.

In the randomized communication model, the players might have access to a shared random string (\emph{public randomness}), and their own private random strings (\emph{private randomness}).  These random strings are independent, but they can have any desired distributions individually. In the randomized model the transcript also includes the public random string  in addition to the transmitted bits. Similar to the case of deterministic protocols, the \emph{cost} is the total number of bits transmitted on the worst case input and random strings. The \emph{average cost} of the protocol is the expected number of bits transmitted on the worst case input.

For a function $f\colon \cX \times \cY \to \cZ$ and a parameter $\epsilon>0$, we denote by $R_\epsilon(f)$ the cost of the best randomized protocol for computing $f$ with probability of error at most $\epsilon$ on \emph{every} input.

\subsection{Information complexity}  \label{sec:IC-definition}

The setting is the same as  in communication complexity, where Alice and Bob (having infinite computational power) wish to mutually compute a function $f\colon \cX \times \cY \to \cZ$. To be able to measure information, we also need to assume that there is a prior distribution $\mu$ on $\cX \times \cY$.

 For the purpose of communication complexity, once we allow public randomness, it makes no difference whether we permit the players to have private random strings or not. This is because the private random strings can be simulated by parts of the public random string. On the other hand, for information complexity, it is crucial to  permit private randomness, and once we allow private randomness, public randomness becomes inessential. Indeed, one of the players can use her private randomness to generate the public random string, and then transmit it to the other player. Although this might have very large communication cost, it has no information cost, as it does not reveal any information about the players' input.

Probably the most natural way to define the information cost of a protocol is to consider the amount of information that is revealed about the inputs $X$ and $Y$ to an external observer who sees the transmitted bits and the public randomness. This is known as the \emph{external information cost} and is formally defined as the mutual information between $XY$ and the transcript of the protocol (recall that the transcript $\Pi_{XY}$ contains the public random string $R$). While this notion is interesting and useful, it turns out there is another way of defining information cost that  has  some very useful properties. This is called the \emph{internal information cost} or just the \emph{information cost} for short, and is equal to  the amount of information that Alice and Bob learn about each other's input from the communication. Note that Bob knows $Y$, the public randomness $R$ and his own private randomness $R_B$, and thus what he learns about $X$ \emph{from the communication} can be measured by $I(X;\Pi|YRR_B)$. Similarly, what Alice learns about $Y$ from the communication can be measured by  $I(Y;\Pi|XRR_A)$ where $R_A$ is Alice's private random string. It is not  difficult to see~\cite{MR2743255} that  conditioning on the public and private randomness does not affect these quantities. In other words $I(X;\Pi|YRR_B)=I(X;\Pi|Y)$ and $I(Y;\Pi|XRR_A)= I(Y;\Pi|X)$. We summarize these in the following definition.

\begin{definition}
\label{def:infocost}
The \emph{external information cost} and the \emph{internal information cost} of a protocol $\pi$ with respect to a distribution $\mu$ on inputs from $\cX \times \cY$ are defined as
$$\IC_\mu^\ext(\pi) = I(\Pi; XY),$$
and
$$\IC_\mu(\pi) = I(\Pi; X|Y)+I(\Pi; Y|X),$$
respectively, where $\Pi=\Pi_{XY}$ is the transcript of the protocol when it is executed on the inputs $XY$.
\end{definition}

We will be interested in certain \emph{communication tasks}. Let $[f,\eps]$ denote the task of computing  the value of $f(x,y)$ correctly with probability at least $1-\eps$ for \emph{every} $(x,y)$. Thus a protocol $\pi$ performs this task  if
\[
\Pr[\pi(x,y) \neq f(x,y)] \le \epsilon, \quad \forall\ (x,y) \in \cX \times \cY.
\]
Given another distribution $\nu$ on $\cX \times \cY$, let $[f, \nu, \epsilon]$ denote the task of computing  the value of $f(x,y)$ correctly with probability at least $1-\eps$ if the input $(x,y)$ is sampled from the distribution $\nu$. A protocol $\pi$ performs this task  if
\[
\Pr_{(x,y) \sim \nu}[\pi(x,y) \neq f(x,y)] \le \epsilon.
\]
Note that a protocol $\pi$ performs $[f, 0]$ if it  computes $f$ correctly on \emph{every} input while performing $[f, \nu,0]$  means computing $f$ correctly on the inputs that belong to the support of $\nu$. 

The \emph{information complexity}  of a communication task $T$ with respect to a measure $\mu$ is defined as
\[ \IC_\mu(T) = \inf_{\pi :\ \pi \text{\  performs\ } T} \IC_\mu(\pi). \]
It is essential here that we use infimum rather than minimum as there are tasks for which there is no protocol that achieves $\IC_\mu(T)$ while there is a sequence of protocols whose information cost converges to $\IC_\mu(T)$.  The \emph{external information complexity}  of a communication task $T$ is defined similarly. We will abbreviate $\IC_\mu(f,\eps)=\IC_\mu([f,\eps])$,  $\IC_\mu(f,\nu,\eps)=\IC_\mu([f,\nu,\eps])$, etc. 
It is important to note that when $\mu$ does not have full support, $\IC_\mu(f,\mu,0)$ can be strictly smaller than  $\IC_\mu(f,0)$. We sometimes also abbreviate $\IC_\mu(f) = \IC_\mu(f,0)$.

\begin{remark}[A warning regarding our notation]
In the literature of information complexity it is common to use ``$\IC_\mu(f,\epsilon)$'' to denote the distributional error case, i.e. what we denote by $\IC_\mu(f,\mu,\epsilon)$. Unfortunately this has become the source of some confusions in the past, as sometimes  ``$\IC_\mu(f,\epsilon)$'' is used to denote both of the distributional error $[f,\mu,\epsilon]$  and the point-wise error $[f,\epsilon]$. To avoid ambiguity we distinguish the two cases by using the different notations $\IC_\mu(f,\mu,\epsilon)$ and $\IC_\mu(f,\epsilon)$.
\end{remark}

\subsection{The continuity of information complexity}   \label{sec:IC-continuity}

The information complexities $\IC_\mu(f,\epsilon)$ and $\IC_\mu(f,\nu,\epsilon)$ are both continuous with respect to $\epsilon$. The following simple lemma from~\cite{braverman2015interactive} proves the continuity for $\epsilon \in (0,1]$. The continuity at $0$ is more complicated and is proven in~\cite{MR3210776}.

\begin{lemma}\cite{braverman2015interactive}
For every $f\colon\mathcal{X} \times \mathcal{Y} \to \mathcal{Z}$, $\epsilon_2 >\epsilon_1>0$ and measures $\mu,\nu$ on $\mathcal{X} \times \mathcal{Y}$, we have
\begin{equation}
\label{eq:continuityEps}
\IC_\mu(f,\nu,\epsilon_1)- \IC_\mu(f,\nu,\epsilon_2) \le (1-\epsilon_1/\epsilon_2) \log |\mathcal{X} \times \mathcal{Y}|,
\end{equation}
\end{lemma}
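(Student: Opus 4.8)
The plan is to prove this by the standard \emph{protocol-mixing} argument: interpolate between a near-optimal protocol for the harder task $[f,\nu,\epsilon_2]$ and the trivial protocol that reveals both inputs and hence never errs. Fix $\delta>0$ and let $\pi$ be a protocol performing $[f,\nu,\epsilon_2]$ with $\IC_\mu(\pi)\le\IC_\mu(f,\nu,\epsilon_2)+\delta$, which exists since $\IC_\mu(f,\nu,\epsilon_2)$ is an infimum. Let $\tau$ be the protocol in which Alice sends $x$, Bob sends $y$, and both output $f(x,y)$; then $\tau$ performs $[f,0]$ (in particular $[f,\nu,0]$), and its internal information cost is $I(\Pi_\tau;X|Y)+I(\Pi_\tau;Y|X)=\entropy(X|Y)+\entropy(Y|X)\le\log|\mathcal{X}|+\log|\mathcal{Y}|=\log|\mathcal{X}\times\mathcal{Y}|$.

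Next I would form the protocol $\pi'$ which tosses a fresh public coin $C$, runs $\tau$ with probability $q:=1-\epsilon_1/\epsilon_2\in(0,1)$, and runs $\pi$ with the remaining probability $1-q=\epsilon_1/\epsilon_2$. Since $\tau$ never errs, the distributional error of $\pi'$ under $\nu$ is at most $(1-q)\epsilon_2=\epsilon_1$, so $\pi'$ performs $[f,\nu,\epsilon_1]$ and therefore $\IC_\mu(f,\nu,\epsilon_1)\le\IC_\mu(\pi')$.

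The one point that requires (routine) care is that public-coin mixing costs nothing in internal information. The transcript $\Pi'$ contains $C$, and $C$ is independent of $(X,Y)$, so $I(\Pi';X|Y)=I(C\Pi';X|Y)=I(C;X|Y)+I(\Pi';X|YC)=I(\Pi';X|YC)=q\,I(\Pi_\tau;X|Y)+(1-q)\,I(\Pi_\pi;X|Y)$, and symmetrically $I(\Pi';Y|X)=q\,I(\Pi_\tau;Y|X)+(1-q)\,I(\Pi_\pi;Y|X)$; hence $\IC_\mu(\pi')=q\,\IC_\mu(\tau)+(1-q)\,\IC_\mu(\pi)$. Plugging in the bound on $\IC_\mu(\tau)$ from the first paragraph and using $(1-q)\IC_\mu(\pi)\le\IC_\mu(\pi)$ gives $\IC_\mu(f,\nu,\epsilon_1)\le q\log|\mathcal{X}\times\mathcal{Y}|+\IC_\mu(\pi)\le(1-\epsilon_1/\epsilon_2)\log|\mathcal{X}\times\mathcal{Y}|+\IC_\mu(f,\nu,\epsilon_2)+\delta$, and letting $\delta\to0$ yields \eqref{eq:continuityEps}.

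I do not expect any genuine obstacle; the proof is short. The only things to verify carefully are the additivity identity $\IC_\mu(\pi')=q\,\IC_\mu(\tau)+(1-q)\,\IC_\mu(\pi)$ (which uses independence of the public coin from the inputs together with the transcript conventions of Section~\ref{sec:IC-definition}) and the elementary entropy bound $\entropy(X|Y)+\entropy(Y|X)\le\log|\mathcal{X}\times\mathcal{Y}|$. I would also note that the identical argument yields the external analogue, since $\IC_\mu^\ext(\tau)=\entropy(XY)\le\log|\mathcal{X}\times\mathcal{Y}|$, even though only the internal version is claimed here.
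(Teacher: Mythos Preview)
Your proof is correct and is essentially the same as the paper's: both mix a near-optimal protocol for $[f,\nu,\epsilon_2]$ with the trivial ``exchange inputs'' protocol using a public coin with bias $1-\epsilon_1/\epsilon_2$, and bound the resulting information cost by $\IC_\mu(\pi)+(1-\epsilon_1/\epsilon_2)\log|\mathcal{X}\times\mathcal{Y}|$. You simply spell out the additivity identity $\IC_\mu(\pi')=q\,\IC_\mu(\tau)+(1-q)\,\IC_\mu(\pi)$ and the infimum/$\delta$-approximation step more explicitly than the paper does.
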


\begin{proof}
Let $f\colon\mathcal{X} \times \mathcal{Y} \to \mathcal{Z}$, and consider a protocol $\pi$ with information cost $I$, and error $\epsilon_2>0$. Set $\delta= 1-\epsilon_1/\epsilon_2$, and let $\tau$ be the following protocol
\begin{itemize}
 \item With probability $1-\delta$ run $\pi$.
 \item With probability $\delta$ Alice and Bob exchange their inputs and compute $f(x,y)$.
\end{itemize}
The theorem follows as the new protocol has error at most $(1-\delta)\epsilon_2=\epsilon_1$, and information cost at most $I + \delta \log |\mathcal{X}  \times \mathcal{Y}|$.
\end{proof}

\begin{remark}
The same proof implies
$\IC_\mu(f,\epsilon_1)- \IC_\mu(f,\epsilon_2)  \le (1-\epsilon_1/\epsilon_2) \log |\mathcal{X} \times \mathcal{Y}|$.
\end{remark}

Note that  $\IC_\mu(f,\mu,0)$ is not always continuous with respect to $\mu$. For example, let
\[   \label{eq:measure-delta}
\mu_\epsilon =
\begin{pmatrix}
\frac{1-\epsilon}{3} & \frac{1-\epsilon}{3} \\
\frac{1-\epsilon}{3} & \epsilon
\end{pmatrix},  \qquad
\mu = \lim_{\epsilon \to 0} \mu_\epsilon = \begin{pmatrix}
\frac{1}{3} & \frac{1}{3} \\
\frac{1}{3} & 0
\end{pmatrix}.
\]
Now for the $2$-bit $\AND$ function, we have $\IC_{\mu}(\AND,\mu,0)=0$, while $\IC_{\mu_\epsilon}(\AND,\mu_\epsilon,0)=\IC_{\mu_\epsilon}(\AND)$ as $\mu_\epsilon$ has full support. Thus
$$\lim_{\epsilon \to 0} \IC_{\mu_\epsilon}(\AND, \mu_\epsilon,0) =\lim_{\epsilon \to 0} \IC_{\mu_\epsilon}(\AND) = \IC_{\mu}(\AND),$$
which is known to be bounded away from $0$. The same example also shows that $\IC_\mu(f, \mu, \epsilon)$ is not always continuous with respect to $\epsilon$ at $\epsilon = 0$ if $\mu$ depends on $\epsilon$. In fact, $\IC_{\mu_\epsilon}(\AND, \mu_\epsilon, \epsilon) = 0$ while $\IC_{\mu_\epsilon}(\AND, \mu_\epsilon, 0) = \IC_{\mu_\epsilon}(\AND)$ for all $\epsilon > 0$, hence when $\epsilon > 0$ is sufficiently small we find $\IC_{\mu_\epsilon}(\AND, \mu_\epsilon, 0)$ is close to $\IC_\mu(\AND)$ which is bounded away from $0$.

However it turns out that $\IC_\mu(f,\nu,\epsilon)$ is continuous with respect to $\mu$ for all $\epsilon \ge 0$ when $\nu$ is fixed.  This  follows from the fact, established in~\cite[Lemma 4.4]{SelfRed}, that for every protocol $\pi$ and every two measures $\mu_1$ and $\mu_2$ with $|\mu_1-\mu_2| \le \delta$ (the distribution metric is statistical distance), we have
\begin{equation}
\label{eq:continMu}
|\IC_{\mu_1}(\pi)-\IC_{\mu_2}(\pi)| \le 2 \log(|\cX \times \cY|) \delta + 2 H(2\delta).
\end{equation}
Consequently
$$|\IC_{\mu_1}(f,\nu,\epsilon)-\IC_{\mu_2}(f,\nu,\epsilon)| \le 2 \log(|\cX \times \cY|) \delta + 2 H(2\delta),$$
as $\IC_{\mu_1}(f,\nu,\epsilon) = \inf_{\pi} \IC_{\mu_1}(\pi)$ and $\IC_{\mu_2}(f,\nu,\epsilon)=\inf_\pi \IC_{\mu_2}(\pi)$ where both infimums are over all protocols $\pi$ that computes $[f, \nu, \epsilon]$. In particular, by taking $\epsilon = 0$ and $\nu$ to be a measure with full support, we get the following theorem.

\begin{theorem}[{\cite[Lemma 4.4]{SelfRed}}]  \label{thm:uniform-continuity-without-error}
$\IC_\mu(f)$ is uniformly continuous with respect to $\mu$.
\end{theorem}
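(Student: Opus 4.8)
The plan is to deduce the statement directly from the per-protocol estimate \eqref{eq:continMu}, together with the simple observation that fixing a full-support reference measure makes the class of admissible protocols independent of $\mu$, so that the estimate passes through the infimum with a modulus of continuity that is uniform in everything.

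First I would fix once and for all a measure $\nu$ on $\cX \times \cY$ with $\supp(\nu) = \cX \times \cY$. A protocol computes $f$ correctly on every input precisely when it computes $f$ correctly on $\supp(\nu)$, so the tasks $[f,0]$ and $[f,\nu,0]$ are literally the same task; consequently
$$\IC_\mu(f) = \IC_\mu(f,0) = \IC_\mu(f,\nu,0) = \inf_{\pi \in \mathcal P} \IC_\mu(\pi),$$
where $\mathcal P$ denotes the set of all protocols performing $[f,0]$. The crucial point is that $\mathcal P$ is one and the same set for every $\mu$.

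Next I would invoke \eqref{eq:continMu} from \cite{SelfRed}. Set $g(\delta) := 2 \log(|\cX \times \cY|)\,\delta + 2 H(2\delta)$, which tends to $0$ as $\delta \to 0$ and depends on neither the measures nor the protocol. For any two measures $\mu_1, \mu_2$ with $|\mu_1 - \mu_2| \le \delta$ and any $\pi \in \mathcal P$ we have $\IC_{\mu_1}(\pi) \le \IC_{\mu_2}(\pi) + g(\delta)$. Taking the infimum over $\pi \in \mathcal P$ on both sides and using the previous display gives $\IC_{\mu_1}(f) \le \IC_{\mu_2}(f) + g(\delta)$, and by the symmetric argument $|\IC_{\mu_1}(f) - \IC_{\mu_2}(f)| \le g(\delta)$. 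This is precisely uniform continuity of $\mu \mapsto \IC_\mu(f)$ with modulus $g$.

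The only step that requires genuine care is the first one: \eqref{eq:continMu} bounds the variation of $\IC_\mu(\pi)$ for a \emph{single} fixed protocol and says nothing a priori about how the optimizing protocol moves as $\mu$ varies, so one really does need the admissible class $\mathcal P$ to be independent of $\mu$ before the per-protocol bound can be pushed through the infimum. Everything else is the routine fact that an infimum of a family of functions sharing a common modulus of continuity is uniformly continuous with that modulus, and the substantive analytic input is quoted from \cite{SelfRed}.
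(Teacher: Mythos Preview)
Your proof is correct and follows essentially the same approach as the paper: fix a full-support $\nu$ so that $\IC_\mu(f)=\IC_\mu(f,\nu,0)$, note that the class of admissible protocols is then independent of $\mu$, and push the per-protocol bound \eqref{eq:continMu} through the infimum. The paper compresses this into a single sentence just before the theorem statement, while you spell out the infimum step more explicitly, but the argument is the same.
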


Finally, note that if $|\nu_1-\nu_2| \le \delta \le \epsilon$, then
\begin{equation}
\label{eq:continNu}
\IC_{\mu}(f,\nu_1,\epsilon+\delta) \le \IC_{\mu}(f,\nu_2,\epsilon) \le  \IC_{\mu}(f,\nu_1,\epsilon-\delta).
\end{equation}
This proves the continuity with respect to $\nu$ when $\epsilon>0$. The following theorem summarizes the continuity of $\IC_{\mu}(f,\nu,\epsilon)$ with respect to its parameters.

\begin{theorem}[Uniform continuity with error]  \label{thm:uniform-continuity-with-error}
Consider $\delta>0$. For each $f\colon\mathcal{X} \times \mathcal{Y} \to \mathcal{Z}$, real numbers $\epsilon_2 \ge \epsilon_1 \ge \delta$, and measures $\mu_1,\mu_2,\nu_1,\nu_2$ on $\mathcal{X} \times \mathcal{Y}$ with $|\mu_1-\mu_2| \le \delta$ and $|\nu_1-\nu_2| \le \delta$, we have
$$|\IC_{\mu_1}(f,\nu_1,\epsilon_1)- \IC_{\mu_2}(f,\nu_2,\epsilon_2)| \le   \left(1-\frac{\epsilon_1}{\epsilon_2}+\frac{4\delta}{\epsilon_1}\right) \log |\mathcal{X} \times \mathcal{Y}|+2 H(2\delta).$$
\end{theorem}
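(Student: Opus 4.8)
The plan is to obtain the bound by chaining, through the triangle inequality, the three continuity estimates already established above: the $\epsilon$-continuity Lemma of~\cite{braverman2015interactive} (with the second distribution held fixed), the $\mu$-continuity estimate~\eqref{eq:continMu}, and the $\nu$-continuity estimate~\eqref{eq:continNu}. Throughout write $L=\log|\cX\times\cY|$. I first reduce to a convenient range of parameters. Since the protocol in which the players exchange their inputs witnesses $\IC_\mu(f,\nu,\epsilon)\le L$ for every $\mu,\nu,\epsilon$, both quantities in the theorem lie in $[0,L]$, so the asserted inequality holds automatically whenever its right-hand coefficient satisfies $1-\epsilon_1/\epsilon_2+4\delta/\epsilon_1\ge 1$; moreover if $\epsilon_1\ge 1$ both information complexities vanish and there is nothing to prove. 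Hence I may assume $\epsilon_1<1$ and $4\delta/\epsilon_1<\epsilon_1/\epsilon_2\le 1$, i.e.\ $\delta<\epsilon_1/4$; in this regime $2\delta<\tfrac12$ (so $H(2\delta)$ is meaningful) and $\epsilon_1-\delta>0$ (so the $\epsilon$-continuity Lemma applies at the perturbed error level).

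Next I interpolate through the hybrid quantity $\IC_{\mu_2}(f,\nu_1,\epsilon_1)$:
\begin{multline*}
|\IC_{\mu_1}(f,\nu_1,\epsilon_1)-\IC_{\mu_2}(f,\nu_2,\epsilon_2)|\\
\le |\IC_{\mu_1}(f,\nu_1,\epsilon_1)-\IC_{\mu_2}(f,\nu_1,\epsilon_1)| + |\IC_{\mu_2}(f,\nu_1,\epsilon_1)-\IC_{\mu_2}(f,\nu_2,\epsilon_2)|.
\end{multline*}
The first summand is at most $2L\delta+2H(2\delta)$ by~\eqref{eq:continMu}, since both $\IC_{\mu_1}(f,\nu_1,\epsilon_1)$ and $\IC_{\mu_2}(f,\nu_1,\epsilon_1)$ are infima, over the same family of protocols (those performing $[f,\nu_1,\epsilon_1]$), of $\IC_{\mu_1}(\pi)$ and $\IC_{\mu_2}(\pi)$ respectively, and these differ by at most $2L\delta+2H(2\delta)$ for each individual $\pi$.

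For the second summand I invoke~\eqref{eq:continNu} with $\mu=\mu_2$, error parameter $\epsilon_1$, and the roles of $\nu_1,\nu_2$ exchanged (permissible since $|\nu_1-\nu_2|\le\delta\le\epsilon_1$), which gives
\begin{equation*}
\IC_{\mu_2}(f,\nu_2,\epsilon_1+\delta)\ \le\ \IC_{\mu_2}(f,\nu_1,\epsilon_1)\ \le\ \IC_{\mu_2}(f,\nu_2,\epsilon_1-\delta).
\end{equation*}
In the upper direction, $\IC_{\mu_2}(f,\nu_1,\epsilon_1)-\IC_{\mu_2}(f,\nu_2,\epsilon_2)\le\IC_{\mu_2}(f,\nu_2,\epsilon_1-\delta)-\IC_{\mu_2}(f,\nu_2,\epsilon_2)$, which by the $\epsilon$-continuity Lemma (valid as $\epsilon_2\ge\epsilon_1>\epsilon_1-\delta>0$) and $\epsilon_2\ge\epsilon_1$ is at most $(1-\tfrac{\epsilon_1-\delta}{\epsilon_2})L\le(1-\tfrac{\epsilon_1}{\epsilon_2}+\tfrac{\delta}{\epsilon_1})L$. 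In the lower direction, $\IC_{\mu_2}(f,\nu_2,\epsilon_2)-\IC_{\mu_2}(f,\nu_1,\epsilon_1)\le\IC_{\mu_2}(f,\nu_2,\epsilon_2)-\IC_{\mu_2}(f,\nu_2,\epsilon_1+\delta)$; if $\epsilon_1+\delta\le\epsilon_2$ this is $\le 0$ by monotonicity of $\IC$ in the error parameter, and if $\epsilon_1+\delta>\epsilon_2$ the $\epsilon$-continuity Lemma gives at most $(1-\tfrac{\epsilon_2}{\epsilon_1+\delta})L\le\tfrac{\delta}{\epsilon_1}L\le(1-\tfrac{\epsilon_1}{\epsilon_2}+\tfrac{\delta}{\epsilon_1})L$. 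So the second summand is at most $(1-\tfrac{\epsilon_1}{\epsilon_2}+\tfrac{\delta}{\epsilon_1})L$; adding the two summands and using $2\delta\le 3\delta/\epsilon_1$ (true since $\epsilon_1<1$) to absorb $2L\delta+\tfrac{\delta}{\epsilon_1}L$ into $\tfrac{4\delta}{\epsilon_1}L$ yields the theorem.

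The step I expect to require the most care is the handling of the second summand: replacing $\nu_1$ by $\nu_2$ via~\eqref{eq:continNu} unavoidably shifts the error budget by $\pm\delta$, and one must then reconcile the shifted levels $\epsilon_1\pm\delta$ with $\epsilon_2$ through the $\epsilon$-continuity Lemma, which forces the small case distinction on the sign of $\epsilon_1+\delta-\epsilon_2$. Making the constants collapse to exactly $1-\epsilon_1/\epsilon_2+4\delta/\epsilon_1$ relies on the hypothesis $\epsilon_1\ge\delta$ (so that $\epsilon_1-\delta>0$) together with $\epsilon_1\le 1$, after first having excised the parameter range in which the claimed coefficient already dominates the trivial bound $L$.
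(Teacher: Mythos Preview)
Your proof is correct and follows essentially the same approach as the paper: both chain the three continuity estimates \eqref{eq:continuityEps}, \eqref{eq:continMu}, \eqref{eq:continNu} via the triangle inequality, differing only in that the paper passes through two intermediate points $\IC_{\mu_2}(f,\nu_2,\epsilon_1)$ and $\IC_{\mu_2}(f,\nu_1,\epsilon_1)$ (handling the $\epsilon$- and $\nu$-changes separately) while you merge those two hops into one, at the cost of a small case split on the sign of $\epsilon_1+\delta-\epsilon_2$. Your preliminary reduction to $\epsilon_1<1$ and $\delta<\epsilon_1/4$ is a nice touch that makes explicit the boundary issues (e.g.\ $\epsilon_1-\delta>0$) that the paper leaves implicit.
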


\begin{proof}
By (\ref{eq:continuityEps}), we have
$$|\IC_{\mu_2}(f,\nu_2,\epsilon_2)- \IC_{\mu_2}(f,\nu_2,\epsilon_1)| \le (1-\frac{\epsilon_1}{\epsilon_2}) \log(|\cX \times \cY|).$$
By (\ref{eq:continNu}) and (\ref{eq:continuityEps}), we have
$$|\IC_{\mu_2}(f,\nu_2,\epsilon_1)- \IC_{\mu_2}(f,\nu_1,\epsilon_1)| \le(1-\frac{\epsilon_1-\delta}{\epsilon_1+\delta}) \log(|\cX \times \cY|) \le \frac{2\delta}{\epsilon_1} \log(|\cX \times \cY|) .$$
By (\ref{eq:continMu}), we have
$$|\IC_{\mu_2}(f,\nu_1,\epsilon_1)- \IC_{\mu_1}(f,\nu_1,\epsilon_1)| \le 2 \log(|\cX \times \cY|) \delta + 2 H(2\delta).$$
These three inequalities  imply the theorem.
\end{proof}

\subsection{The multiparty number-in-hand model}  \label{sec:multipartyModel}

The number-in-hand model is the most straightforward  generalization of Yao's two-party model to the settings where  more than two players are present. In this model there are $k$ players who wish to collaboratively compute a function $f\colon\cX_1 \times \ldots \times \cX_k \to \cZ$. The communication is in the shared blackboard model, which means that all the communicated bits are visible to all the players. Let $\mu$ be a probability distribution on $\cX_1 \times \ldots \times \cX_k$, and let $X=(X_1,\ldots, X_k)$ be sampled from $\cX_1 \times \ldots \times \cX_k$ according to $\mu$. Definition~\ref{def:infocost} generalizes in a straightforward manner to
$$\IC_\mu^\ext(\pi) = I(\Pi; X),$$
and
$$\IC_\mu(\pi) = \sum_{i=1}^k I(\Pi; X_{-i}|X_i),$$
where  $X_{-i}:=(X_1,\ldots ,X_{i-1},X_{i+1},\ldots, X_k)$. Note also that $I(\Pi; X|X_i) = I(\Pi;X_{-i}|X_i)$, and thus we have
$$\IC_\mu(\pi) = \sum_{i=1}^k I(\Pi; X|X_i).$$
The notations $\IC_\mu(f)$, $\IC_\mu(f,\epsilon)$, and $\IC_\mu(f,\nu,\epsilon)$, and the continuity results in Section~\ref{sec:IC-continuity} also generalize in a straightforward manner to this setting.

\section{The local characterization of the optimal information cost\label{sec:LocalConcav}}

We start by some definitions. Let $B$ be a random bit sent by one of the players, and let $\mu_0 = \mu|_{B=0}$ and $\mu_1 = \mu|_{B=1}$, or in other words
$$\mu_b(xy):= \Pr[XY=xy|B=b],$$
for $b=0,1$. Denote $\Pr[\cdot |xy] := \Pr[\cdot | XY=xy]$.

\begin{definition}
Let $\mu$ be a distribution and $B$ be a signal sent by one of the players. 
\begin{itemize}
\item $B$ is called \emph{unbiased} with respect to $\mu$ if $\Pr[B=0]=\Pr[B=1]=\frac{1}{2}$. 
\item $B$ is called \emph{non-crossing} if $\mu(xy) < \mu(x'y')$ implies that $\mu_b(xy) \le \mu_b(x'y')$ for $b=0,1$.
\item $B$ is called \emph{$\epsilon$-weak} if $|\Pr[B=0|xy]- \Pr[B=1|xy] | \le \epsilon$ for every input $xy$.
\end{itemize}
A protocol is said to be in \emph{normal form} with respect to $\mu$ if all its signals are unbiased and non-crossing with respect to $\mu$.
\end{definition}

Let $\Delta(\cX \times \cY)$ denote the set of distributions on $\cX \times \cY$. A measure $\mu \in \Delta(\cX \times \cY)$ is said to be \emph{internal-trivial} (\emph{resp. external-trivial}) for $f$ if $\IC_\mu(f)=0$ (resp. $\IC_\mu^\ext(f)=0$). These measures are characterized in \cite{DFHL}.

\subsection{The Local Characterization}   \label{sec:local-char}

Suppose that after a random bit $B$ is sent, if $B=0$, the players continue by running a protocol that is (almost) optimal for $\mu_0$, and if $B=1$, they  run a protocol that is (almost) optimal for $\mu_1$. Note that the amount of information that $B$ reveals about the inputs to an external observer is  $\MI(B;XY)$. This shows
\begin{equation}
\label{eq:localExternalUpper}
\IC_\mu^\ext(f) \le \MI(B;XY) + \Ex_B [\IC^\ext_{\mu_B}(f)],
\end{equation}
and similarly
\begin{equation}
\label{eq:localInternalUpper}
\IC_\mu(f) \le \MI(B;X|Y)+\MI(B;Y|X) +\Ex_B [\IC_{\mu_B}(f)].
\end{equation}
In~\cite{MR3210776} it is shown that these inequalities essentially characterize $\IC_\mu^\ext(f,\mu,0)$ and $\IC_\mu(f,\mu,0)$. Denote $\MI^\ext_B:=\MI(B;XY)$, and $\MI_B:=\MI(B;X|Y)+\MI(B;Y|X)$.

\begin{theorem}[\cite{MR3210776}]
\label{thm:LocalCharDistError}
Suppose that $C\colon\Delta(\cX \times \cY) \to [0,\log(| \cX \times \cY|)]$ satisfies
\begin{enumerate}[(i)]
\item $C(\mu)=0$ for every  measure $\mu$ such that $\IC_\mu(f,\mu,0) = 0$, and
\item for every signal $B$ that can be sent by one of the players
$$
\label{eq:localConc}
C(\mu) \le \MI_B +\Ex_B [ C(\mu_B)].
$$
\end{enumerate}
Then $C(\mu) \le \IC_\mu(f,\mu,0)$. Similarly if $\IC_\mu(f,\mu,0)$ is replaced by $\IC^\ext_\mu(f,\mu,0)$, and $\MI_B$ is replaced by $\MI^\ext_B$, then $C(\mu) \le \IC_\mu^\ext(f,\mu,0)$. Furthermore, in both of the external and the internal cases, it suffices to verify (ii) only for non-crossing unbiased signals $B$.
\end{theorem}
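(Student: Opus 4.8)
The plan is to deduce $C(\mu)\le \IC_\mu(f,\mu,0)$ from the stronger fact that $C(\mu)\le \IC_\mu(\pi)$ for \emph{every} protocol $\pi$ performing $[f,\mu,0]$, and then take the infimum over such $\pi$; the external assertion follows by the same argument with $\MI_B$ replaced by $\MI^\ext_B$ and internal information by external information throughout. (Here I read hypotheses (i) and (ii) as properties of $C$ quantified over \emph{all} distributions, which is how they are used.) Assume without loss of generality that $\pi$ uses no public randomness, which changes neither the internal nor the external cost. Write the transcript as the bit sequence $B_1B_2\cdots$, let $\mu^{(t)}$ be the posterior distribution of $XY$ given $B_1,\dots,B_t$ (a random element of $\Delta(\cX\times\cY)$), and let $L_t:=I(B_{\le t};X\mid Y)+I(B_{\le t};Y\mid X)$ be the internal information revealed in the first $t$ rounds. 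I would prove by induction on $t$ the potential inequality $C(\mu)\le L_t+\Ex_{B_{\le t}}[C(\mu^{(t)})]$. The case $t=0$ is an identity. For the step, condition on $B_{\le t}$: the next bit $B_{t+1}$ is a randomized function of one player's coordinate, hence a legitimate signal with respect to the current distribution $\mu^{(t)}$, so (ii) gives $C(\mu^{(t)})\le \MI_{B_{t+1}}+\Ex[C(\mu^{(t+1)})\mid B_{\le t}]$; taking expectations over $B_{\le t}$ and using the chain rule $\Ex_{B_{\le t}}[\MI_{B_{t+1}}]=L_{t+1}-L_t$ (and its analogue for $Y$) gives the inequality for $t+1$.

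To conclude, let $t\to\infty$: since $L_t$ increases to $L_\infty\le \IC_\mu(\pi)$ and $C\ge 0$, it suffices that $\Ex[C(\mu^{(t)})]\to 0$. A protocol performing $[f,\mu,0]$ halts $\mu$-almost surely, and once the transcript $\tau$ is complete the zero-error requirement forces $f$ to coincide, on $\supp(\mu|_\tau)$, with a function of just one player's input; hence the zero-communication protocol performs $[f,\mu|_\tau,0]$ (and likewise externally), so $\IC_{\mu|_\tau}(f,\mu|_\tau,0)=0$ and, by (i), $C(\mu|_\tau)=0$. Thus $C(\mu^{(t)})$ vanishes on the event that $\pi$ has halted by round $t$, whose complement has $\mu$-probability tending to $0$; since $C$ is bounded by $\log|\cX\times\cY|$, the tail expectation $\Ex[C(\mu^{(t)})]$ tends to $0$, and we obtain $C(\mu)\le \IC_\mu(\pi)$, hence $C(\mu)\le \IC_\mu(f,\mu,0)$.

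For the final (``furthermore'') assertion I would show that (ii) for all signals follows from (ii) for non-crossing unbiased signals, by simulating an arbitrary signal $B$ --- sent, say, by Alice, with $\Pr[B=1]=p$, so $\mu=p\mu_1+(1-p)\mu_0$ --- by a countable sequence of non-crossing unbiased signals that reveals exactly $B$. Concretely, move along the segment $\{\rho_q:=q\mu_1+(1-q)\mu_0 : q\in[0,1]\}\subseteq\Delta(\cX\times\cY)$, starting at $\rho_p=\mu$, by a martingale $q_t$ --- the public posterior probability of $B=1$ --- with symmetric increments, absorbed at $q\in\{0,1\}$; Alice, who holds $B$, can implement each symmetric step $\rho_q\mapsto\{\rho_{q+\delta},\rho_{q-\delta}\}$ as an unbiased signal, and one chooses the increments small enough, using occasional asymmetric ``exact-hit'' steps that land precisely on the finitely many values of $q$ at which two point masses of $\rho_q$ coincide (and on the endpoints $0,1$), so that every step is non-crossing with respect to the current $\rho_{q_t}$. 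Running the already-established case of (ii) along this sequence and letting $t\to\infty$ --- exactly as in the two paragraphs above --- yields $C(\mu)\le \MI_B+\Ex_B[C(\mu_B)]$: the limiting posteriors are $\mu_1,\mu_0$ with probabilities $p,1-p$, and, since all these signals are functions of $B$ and of auxiliary coins independent of $XY$, the accumulated leakage converges to exactly $\MI_B$ (and likewise to $\MI^\ext_B$).

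The substance of the proof is in the two limiting steps. The one in the second paragraph is mild: it relies only on the almost-sure termination of zero-error protocols and on the boundedness of $C$ --- which is exactly why $C$ is required to map into $[0,\log|\cX\times\cY|]$. The delicate step is the one in the third paragraph: the non-crossing constraint forces the increments of $q_t$ to degenerate as $q_t$ nears one of the ``crossing'' values of $q$, so one must verify that, thanks to the exact-hit and fixed-size ``jump past the crossing'' moves, the martingale $q_t$ still converges to $\{0,1\}$ almost surely --- so that $\mu_0,\mu_1$ are genuinely attained and the leakage is exactly $\MI_B$ rather than something larger. Carrying out this construction of a segment-valued martingale absorbed at its endpoints in the presence of finitely many interior ``barriers'' where the admissible step size collapses is the main obstacle.
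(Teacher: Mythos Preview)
Your approach is essentially the one the paper takes (for the closely related Theorem~\ref{thm:LocalCharNoError}, since Theorem~\ref{thm:LocalCharDistError} itself is only cited): the potential inequality $C(\mu)\le L_t+\Ex[C(\mu^{(t)})]$ by induction, the limit $t\to\infty$ using condition~(i) at terminated leaves together with the boundedness of $C$, and a signal-simulation argument for the ``furthermore'' clause. The only organizational difference is that the paper first converts the whole protocol into normal form via Lemma~\ref{lem:signal-simulation} and then runs a single induction, whereas you factor the last clause out as the separate reduction ``(ii) for non-crossing unbiased signals $\Rightarrow$ (ii) for all signals''; both orderings work and use the same ingredients.

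Two small points. First, at a completed leaf $\tau$ the output is determined by the transcript, so $f$ is actually \emph{constant} on $\supp(\mu|_\tau)$, which is stronger than ``a function of one player's input'' (though your weaker claim already yields $\IC_{\mu|_\tau}(f,\mu|_\tau,0)=0$). Second, on the obstacle you flag: you do not need any asymmetric ``exact-hit'' steps, and indeed you cannot use them since the hypothesis only gives you (ii) for \emph{unbiased} signals. The paper's construction (Lemma~\ref{lem:signal-simulation}) always splits the current point $\mu_c$ symmetrically into $(1-\lambda)\mu_c+\lambda\mu_b$ and $(1+\lambda)\mu_c-\lambda\mu_b$, with $\lambda$ chosen maximal subject to the step being $\epsilon$-weak and non-crossing; when the non-crossing constraint binds, one of the two outcomes lands exactly on a crossing value of $q$, yet the step remains unbiased. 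Almost-sure termination then follows because either $\lambda$ is bounded below by a fixed $\lambda_0>0$, or the walk hits one of the finitely many crossing points --- so your exact-hit intuition is correct, it just does not require giving up unbiasedness.
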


In light of Theorem~\ref{thm:LocalCharDistError}, in order to determine the values of $\IC_\mu(f,\mu,0)$, one has to first prove an upper bound by constructing a protocol (or a sequence of protocols) for every measure $\mu$. Then it suffices to verify that the bound satisfies the conditions of Theorem~\ref{thm:LocalCharDistError}.


In~\cite{MR3210776}, $\IC_\mu(\AND_2)$ is determined using Theorem~\ref{thm:LocalCharDistError}. However, the proof presented in~\cite{MR3210776}  contains some gaps. 
One error is the claim that it suffices to verify (\ref{eq:localConc}) for sufficiently weak signals. While it is not difficult to see that indeed it suffices to verify (\ref{eq:localConc}) for signals $B$ which are $\epsilon$-weak for an absolute constant $\epsilon>0$, in ~\cite{MR3210776} the condition (ii) is only verified for $\epsilon$ that is smaller than a function of $\mu$. This is not sufficient, and one can easily construct a counter-example  by allowing the signal $B$ to become increasingly weaker as $\mu$ moves closer to the boundary (by boundary we mean the set of measures $\mu$ that satisfy Theorem~\ref{thm:LocalCharDistError}~(i)). Indeed, for example, set $C(\mu)=K$ for a very large constant $K$ if $\mu$ does not satisfy Theorem~\ref{thm:LocalCharDistError}~(i), and otherwise set $C(\mu)=0$. Obviously (\ref{eq:localConc}) holds if $\mu$ is on the boundary. On the other hand, if $\mu$ is not on the boundary, then by taking $B$ to be sufficiently weak as a function of $\mu$, we can guarantee that $\mu_0$ and $\mu_1$ are not on the boundary either, and thus (\ref{eq:localConc}) holds in this case as well. However, taking $K$ to be sufficiently large violates the desired conclusion that $C(\mu) \le \IC_\mu(f,\mu,0)$.

To fix these errors, we start by observing that a straightforward adaptation of the proof of Theorem~\ref{thm:LocalCharDistError} yields an identical characterization of $\IC_\mu(f)$, however, with a different boundary condition.

\begin{theorem}
\label{thm:LocalCharNoError}
Suppose that $C\colon\Delta(\cX \times \cY) \to [0,\log(| \cX \times \cY|)]$ satisfies
\begin{enumerate}[(i)]
\item $C(\mu)=0$ for every measure $\mu$ such that $\IC_\mu(f) = 0$, and
\item for every signal $B$ that can be sent by one of the players
$$C(\mu) \le I_B +\Ex_B [ C(\mu_B)]. $$
\end{enumerate}
Then $C(\mu) \le \IC_\mu(f)$. Similarly, if we replace $\IC_\mu(f)$ by $\IC^\ext_\mu(f)$, and $I_B$ by $I^\ext_B$, then $C(\mu) \le \IC_\mu^\ext(f)$.
Furthermore, in both cases, it suffices to verify (ii) only for non-crossing unbiased signals $B$.
\end{theorem}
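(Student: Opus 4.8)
The plan is to reduce Theorem~\ref{thm:LocalCharNoError} to Theorem~\ref{thm:LocalCharDistError} rather than to re-run its proof from scratch. The key observation is that the two theorems differ only in the boundary condition: Theorem~\ref{thm:LocalCharDistError} requires $C$ to vanish on measures with $\IC_\mu(f,\mu,0)=0$, whereas Theorem~\ref{thm:LocalCharNoError} requires it to vanish on the (larger-supported, but in general incomparable) set of measures with $\IC_\mu(f)=0$. Since $\IC_\mu(f)=\IC_\mu(f,\nu,0)$ for any $\nu$ of full support, and since $\IC_\mu(f,\mu,0)\le\IC_\mu(f)$ always, a function $C$ satisfying the hypotheses of Theorem~\ref{thm:LocalCharNoError} need not satisfy hypothesis~(i) of Theorem~\ref{thm:LocalCharDistError}; so the reduction cannot be completely trivial. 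The approach I would take is: first show that the conclusion $C(\mu)\le\IC_\mu(f)$ follows once we know it on the open set of full-support measures, by a continuity/limiting argument; then on full-support measures, invoke the proof machinery of Theorem~\ref{thm:LocalCharDistError} directly, where the two boundary conditions coincide because a full-support $\mu$ has $\IC_\mu(f,\mu,0)=\IC_\mu(f)$.

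More concretely, the first step is: fix $\mu$ and take a sequence $\mu_n\to\mu$ of full-support measures. By Theorem~\ref{thm:uniform-continuity-without-error}, $\IC_{\mu_n}(f)\to\IC_\mu(f)$. If $C$ is continuous on $\Delta(\cX\times\cY)$ — which one may assume, or arrange by noting that the relevant $C$ in applications is continuous, or by a separate argument — then $C(\mu_n)\to C(\mu)$, so it suffices to prove $C(\mu_n)\le\IC_{\mu_n}(f)$ for each $n$. The second step handles full-support $\mu$: here I would replay the argument behind Theorem~\ref{thm:LocalCharDistError}. That argument takes a protocol $\pi$ performing the relevant task with small information cost, looks at its first (unbiased, non-crossing) signal $B$, applies hypothesis~(ii) to peel off one layer, and recurses — using the boundary condition~(i) to terminate the recursion when the measure becomes trivial. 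The only place where the task "$[f,\mu,0]$" versus "$[f,0]$" enters is at the leaves of this recursion, and when the top measure $\mu$ has full support, every $\mu_B$ reachable has support contained in that of $\mu$, so performing $[f,0]$ and $[f,\mu_B,0]$ are the same constraint, and the two notions of triviality agree. Thus hypothesis~(i) of Theorem~\ref{thm:LocalCharNoError} supplies exactly what hypothesis~(i) of Theorem~\ref{thm:LocalCharDistError} needs.

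The "furthermore" clause — that it suffices to verify~(ii) for non-crossing unbiased signals — is inherited for free from Theorem~\ref{thm:LocalCharDistError}: the reduction above only ever feeds that theorem hypotheses about non-crossing unbiased signals, so if~(ii) of Theorem~\ref{thm:LocalCharNoError} is assumed only for such signals, the reduction still goes through verbatim. Alternatively, one recalls from~\cite{MR3210776} that an arbitrary signal can be decomposed, via a convexity argument on the simplex $\Delta(\cX\times\cY)$, into a mixture of non-crossing unbiased signals without increasing $\MI_B+\Ex_B[C(\mu_B)]-C(\mu)$, so checking~(ii) on the extreme signals implies it in general.

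I expect the main obstacle to be the continuity step: one must be careful that $C(\mu_n)\to C(\mu)$, since $C$ is only assumed to be a bounded function, not a continuous one. The clean fix is to prove the theorem first for the \emph{lower semicontinuous envelope} $\widehat C$ of $C$ (or its continuous minorant), observe $\widehat C$ still satisfies~(i) and~(ii) — using that $\mu\mapsto\Ex_B[\,\cdot\,(\mu_B)]$ and $\mu\mapsto\MI_B$ behave well under the envelope operation and that the trivial set is closed under taking full-support perturbations in the limit — and then note $C\le$ (the relevant envelope) is all that is needed, because $\IC_\mu(f)$ is itself continuous so the envelope of $C$ is still $\le\IC_\mu(f)$. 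Handling this interchange of the envelope with condition~(ii) cleanly is the one spot that requires genuine care rather than bookkeeping.
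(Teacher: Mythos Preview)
Your Step~2 --- replaying the induction behind Theorem~\ref{thm:LocalCharDistError} --- is essentially what the paper does, and it works for \emph{every} $\mu$, not just full-support ones, so the continuity detour of Step~1 is unneeded. Concretely: take any $\tau$ computing $[f,0]$, replace each signal by a sequence of non-crossing unbiased ones (Lemma~\ref{lem:signal-simulation}) to obtain a normal-form $\pi$ with $\IC_\mu(\pi)=\IC_\mu(\tau)$, truncate at depth~$c$, and apply~(ii) inductively to get $C(\mu)\le\IC_\mu(\pi_c)+\Ex_\ell[C(\mu_\ell)]$. At each genuine leaf~$\ell$ one has $\IC_{\mu_\ell}(f)=0$, so~(i) gives $C(\mu_\ell)=0$, and letting $c\to\infty$ finishes. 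Your stated reason for the leaf triviality is off, however: it is \emph{not} that $\supp(\mu_\ell)\subseteq\supp(\mu)$ forces $[f,0]=[f,\mu_\ell,0]$ --- leaf distributions typically lose support, so these tasks differ. The correct reason is that a protocol computing $[f,0]$ is correct on every input that can reach leaf~$\ell$; by the rectangle property $f$ is then constant on a rectangle containing $\supp(\mu_\ell)$, and any $\nu$ supported inside an $f$-monochromatic rectangle $A\times B$ has $\IC_\nu(f)=0$ (protocol: each player announces membership in their side of the rectangle --- deterministic under~$\nu$ --- and if either is outside they exchange inputs; this has zero IC under~$\nu$ and is correct on all inputs). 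This argument is indifferent to whether the starting $\mu$ has full support.

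Your envelope patch for Step~1 has a genuine gap. From $C(\mu)\le\MI_B+\Ex_B[C(\mu_B)]$ and $\widehat C\le C$ one only gets $\widehat C(\mu)\le\MI_B+\Ex_B[C(\mu_B)]$, whereas you need $\Ex_B[\widehat C(\mu_B)]$ on the right, and $\widehat C(\mu_B)\le C(\mu_B)$ points the wrong way. Passing to a sequence $\nu_n\to\mu$ realizing the $\liminf$ does not help either: a signal at $\nu_n$ is not the same object as one at~$\mu$, and there is no a priori link between $C((\nu_n)_B)$ and $\widehat C(\mu_B)$. Since the direct argument above requires no continuity hypothesis on~$C$, just drop Step~1.
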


\begin{proof}
We only prove the internal case as the external case is similar. Let $\pi$ be a $c$-bit protocol in normal form and $\Pi$ be its transcript. For every possible transcript $t$, let $\mu_t:=\mu|_{\Pi=t}$.  Condition (ii) says that for every $1$-bit protocol $\pi$,
\begin{equation}
\label{eq:Lemma5.6II}
C(\mu) \le \IC_\mu(\pi) + \Ex_{t \sim \Pi}[C(\mu_t)].
\end{equation}
By a simple induction (see \cite[Lemma 5.6]{MR3210776}) this implies that (\ref{eq:Lemma5.6II}) holds for every $c$-bit protocol $\pi$ in a normal form where $c<\infty$.

Now consider an arbitrary protocol $\tau$ that computes $[f, 0]$. Lemma~\ref{lem:signal-simulation} below shows that one can simulate $\tau$ with a protocol $\pi$ that is in normal form and terminates with probability $1$. Note that $\pi$ also computes $[f, 0]$ and  by Proposition~\ref{prop:diff-protocol-same-cost} we have $\IC_\mu(\tau)=\IC_\mu(\pi)$.

Consider a large integer $c$, and let $\pi_c$ be the protocol that is obtained by truncating $\pi$ after $c$ bits of communication, clearly $\IC_\mu(\pi_c) \le \IC_\mu(\pi)$ as $\pi_c$ is a truncation of $\pi$. Let $G_c$ denote the set of leaves of $\pi_c$ in which the protocol is forced to terminate, and had we run $\pi$ instead, the communication would have continued. Let $\Pi_c$ denote the transcript of $\pi_c$. For any given $\delta>0$, one can guarantee that for every $xy$,
$$\Pr[\Pi_c(xy) \in G_c]<\delta$$
by choosing $c$ to be sufficiently large. As $\pi$ computes $[f, 0]$, for every leaf $t$ in $\pi_c$ such that $t \not\in G_c$, $\mu_t$ is an internal-trivial distribution, hence Condition (i) is satisfied on $\mu_t$ implying $C(\mu_t)=0$. Therefore (\ref{eq:Lemma5.6II}) shows
$$
C(\mu) \le \IC_\mu(\pi_c) + \delta \log(|\cX \times \cY|) \le \IC_\mu(\pi) + \delta \log(|\cX \times \cY|) = \IC_\mu(\tau) + \delta \log(|\cX \times \cY|).
$$
Letting $\delta \to 0$ one obtains the desired bound.
\end{proof}

We use the uniform continuity of $\IC_\mu(f)$ with respect to $\mu$ to prove that it suffices to verify Theorem~\ref{thm:LocalCharNoError}~(ii) for signals $B$ that are weaker than quantities that can depend on $\mu$. This as we shall see suffices to fix the proof of~\cite{MR3210776}.

\begin{theorem}[Main Theorem 1]
\label{thm:LocalCharNoErrorWeak}
Let $w\colon(0,1] \to (0,1]$ be a non-decreasing function, $\Omega \subseteq \Delta(\cX \times \cY)$ be a subset of measures containing the internal trivial distributions for function $f$. Let $\delta(\mu)$ denote the distance of $\mu$ from $\Omega$. Suppose that $C\colon\Delta(\cX \times \cY) \to [0,\log(| \cX \times \cY|)]$ satisfies
\begin{enumerate}[(i)]
\item $C(\mu)$ is uniformly continuous with respect to $\mu$;
\item $C(\mu)=\IC_\mu(f)$ if $\delta(\mu)=0$, and
\item for every non-crossing unbiased $w(\delta(\mu))$-weak signal $B$ that can be sent by one of the players,
\begin{equation}
\label{eq:localConcIII}
C(\mu) \le I_B +\Ex_B [ C(\mu_B)].
\end{equation}
\end{enumerate}
Then $C(\mu) \le \IC_\mu(f)$. Similarly, if we replace $\Omega$ by a subset containing the external trivial distributions for function $f$, in Condition (ii) replace $\IC_\mu(f)$ by $\IC^\ext_\mu(f)$, and in Condition (iii) replace $I_B$ by $I^\ext_B$, then $C(\mu) \le \IC_\mu^\ext(f)$.
\end{theorem}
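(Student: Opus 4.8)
The plan is to reduce Theorem~\ref{thm:LocalCharNoErrorWeak} to Theorem~\ref{thm:LocalCharNoError} by a limiting argument that pushes an arbitrary measure $\mu$ slightly ``into the interior'' so that every signal we need to check becomes sufficiently weak. Concretely, I would fix an arbitrary $\mu \in \Delta(\cX \times \cY)$ and, for a small parameter $t \in (0,1)$, consider the perturbed measure $\mu^{(t)} := (1-t)\mu + t \mu^\star$, where $\mu^\star$ is some fixed full-support ``reference'' distribution (e.g.\ the uniform distribution on $\cX \times \cY$). The point of the perturbation is that $\delta(\mu^{(t)})$ is bounded below by a positive quantity that grows with $t$ (the distance from $\Omega$ does not collapse once we have moved a definite statistical distance away), so $w(\delta(\mu^{(t)}))$ is bounded below by some $\eta(t) > 0$. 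Then I would define the auxiliary function $C_t(\nu) := C\big((1-t)\nu + t\mu^\star\big)$ on $\Delta(\cX\times\cY)$ and attempt to verify the hypotheses of Theorem~\ref{thm:LocalCharNoError} for $C_t$ — not for all signals, but exploiting the ``furthermore'' clause of that theorem, only for non-crossing unbiased signals.

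The key observation that makes this work is that sending a signal $B$ with respect to $\mu^{(t)} = (1-t)\mu + t\mu^\star$ is ``equivalent'' to sending a \emph{weaker} signal with respect to $\mu$: if $B$ has conditional probabilities $\Pr[B=b\mid xy]$ that are the same functions of $(x,y)$ in both experiments (a signal is just a channel from inputs to a bit, independent of the prior), then the induced posteriors $(\mu^{(t)})_b$ satisfy $(\mu^{(t)})_b = (1-t_b)\,\mu_b + t_b\,\mu^\star_b$ for appropriate weights — in any case $(\mu^{(t)})_b$ lies on the segment between a posterior of $\mu$ and a posterior of $\mu^\star$, so it is bounded away from $\Omega$ as well. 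More importantly, I want to run the argument in the other direction: given a non-crossing unbiased signal $B$ for $\mu$ that is \emph{too strong}, I can ``dilute'' it. Replace $B$ by the signal $B'$ that equals $B$ with probability $p$ and is an independent unbiased coin with probability $1-p$; then $B'$ is $\epsilon$-weak for $\epsilon = p$ (roughly), remains non-crossing and can be made unbiased, and the inequality \eqref{eq:localConcIII} applied to all sufficiently diluted signals, combined with induction over a sequence of dilutions shrinking to the original, should recover \eqref{eq:Lemma5.6II}-type inequalities for \emph{all} signals. Running this for $\mu^{(t)}$, where the permitted weakness threshold $w(\delta(\mu^{(t)})) \ge \eta(t) > 0$ is a fixed positive constant, lets us verify condition (ii) of Theorem~\ref{thm:LocalCharNoError} for $C_t$ and hence conclude $C_t(\mu) \le \IC_\mu(f)$, i.e.\ $C(\mu^{(t)}) \le \IC_\mu(f)$. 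Finally I let $t \to 0$: by uniform continuity of $C$ (hypothesis (i)) the left side tends to $C(\mu)$, and $\IC_{\mu}(f)$ is fixed, giving $C(\mu) \le \IC_\mu(f)$ — wait, I must be careful here and instead bound $C(\mu^{(t)}) \le \IC_{\mu^{(t)}}(f)$ and then use uniform continuity of \emph{both} $C$ and $\IC_\bullet(f)$ (Theorem~\ref{thm:uniform-continuity-without-error}) as $t \to 0$.

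Let me restate the clean version of the limiting step, since it is where the bookkeeping matters. For each $t$, I aim to prove $C(\mu^{(t)}) \le \IC_{\mu^{(t)}}(f)$. To do this I apply Theorem~\ref{thm:LocalCharNoError} to the function $\nu \mapsto C(\nu)$ itself but restricted to the closed subset $\Delta_t := \{ (1-t)\nu + t\mu^\star : \nu \in \Delta(\cX\times\cY)\}$, which is an affine image of the simplex bounded away from $\Omega$ by a margin depending on $t$; on $\Delta_t$ every non-crossing unbiased signal is automatically $\eta(t)$-weak \emph{after one step} is false in general, so instead I verify condition (ii) of Theorem~\ref{thm:LocalCharNoError} on all of $\Delta(\cX \times \cY)$ for the function $C$, using the dilution trick to reduce a general signal to weak ones where hypothesis (iii) applies, \emph{and} using hypothesis (ii) of the present theorem (which says $C = \IC_\bullet(f)$ on $\Omega$, and $\IC_\bullet(f)$ itself trivially satisfies \eqref{eq:localConcIII}) to handle the boundary leaves. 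The role of the perturbation $\mu^{(t)}$ is then only to make sure the relevant posteriors never actually reach $\Omega$, so that the weakness threshold $w(\delta(\cdot))$ stays bounded below along the finite induction; in the limit $t\to0$ the threshold may shrink to $0$, but for each fixed $t$ it is positive, which is all we need. The main obstacle, and the step I expect to require the most care, is exactly this dilution/induction: verifying that diluting a non-crossing unbiased signal keeps it non-crossing (this uses the precise definition: $\mu(xy) < \mu(x'y') \Rightarrow \mu_b(xy)\le\mu_b(x'y')$, and a convex combination with the unbiased-coin posterior $\mu$ itself preserves this) and unbiased, and that iterating \eqref{eq:localConcIII} along a chain of dilutions converging to the undiluted signal recovers the full inequality $C(\mu) \le I_B + \Ex_B[C(\mu_B)]$ in the limit — here one needs the uniform continuity of $C$ again, to pass $C(\mu_{B'})\to C(\mu_B)$, and a uniform bound on the information terms $I_{B'} \to I_B$. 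Handling the case where $\mu$ is already in $\Omega$ is immediate from hypothesis (ii), and the external case is identical with $I_B$ replaced by $I_B^{\ext}$ throughout.
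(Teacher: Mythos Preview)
Your proposal contains a genuine gap, and the perturbation device $\mu^{(t)}=(1-t)\mu+t\mu^\star$ is a wrong turn.

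First, the perturbation does not do what you hope. Nothing in the hypotheses says that full-support measures are far from $\Omega$; $\Omega$ is an \emph{arbitrary} set containing the trivial measures (in the application it even includes an extra non-trivial point). So there is no reason $\delta(\mu^{(t)})$ should be bounded below by any $\eta(t)>0$. More seriously, even if $\mu^{(t)}$ itself is far from $\Omega$, once you start sending signals the posteriors can approach $\Omega$ arbitrarily closely --- your own computation shows $(\mu^{(t)})_b$ is a convex combination of posteriors of $\mu$ and of $\mu^\star$, and there is no convexity assumption on the complement of $\Omega$. So the perturbation cannot ``make sure the relevant posteriors never actually reach $\Omega$''.

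Second, the dilution-then-iterate idea for upgrading weak-signal concavity to arbitrary-signal concavity is circular as stated. After one diluted step you land at some $\mu_{B'}$, and the weakness threshold you are now allowed is $w(\delta(\mu_{B'}))$, which may be \emph{smaller} than before; as the random walk drifts toward $\mu_{B=0}$ or $\mu_{B=1}$ (either of which may lie in $\Omega$), the permitted step size shrinks to $0$ and you never verify the full inequality for $B$. You gesture at using hypothesis (ii) ``to handle boundary leaves'', but you have not explained a stopping rule that actually produces such leaves with controlled error.

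The paper's proof avoids both issues by working directly with protocols rather than trying to feed $C$ back into Theorem~\ref{thm:LocalCharNoError}. Fix an arbitrary protocol $\pi$ for $[f,0]$ and a parameter $0<\eta<\delta(\mu)$. Use the Signal Simulation Lemma to replace every signal of $\pi$ by a sequence of $w(\eta)$-weak non-crossing unbiased signals, obtaining $\tilde\pi$ with the same information cost. Now \emph{truncate} $\tilde\pi$ at every node $v$ where $\delta(\mu_v)\le\eta$, calling the result $\tau$. At every internal node $v$ of $\tau$ one has $\delta(\mu_v)>\eta$, hence $w(\eta)\le w(\delta(\mu_v))$ by monotonicity of $w$, so the signal there is $w(\delta(\mu_v))$-weak and hypothesis (iii) applies; induction (Lemma~\ref{lem:induction-lemma}) gives $C(\mu)\le\IC_\mu(\tau)+\Ex_\ell[C(\mu_\ell)]$. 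Each leaf $\mu_\ell$ is $\eta$-close to some $\mu'_\ell\in\Omega$, so by uniform continuity of both $C$ and $\IC_\bullet(f)$ and hypothesis (ii), $C(\mu_\ell)\le\IC_{\mu_\ell}(f)+o_\eta(1)$; since $\tau$ is a truncation of $\tilde\pi$, $\IC_\mu(\tau)+\Ex_\ell[\IC_{\mu_\ell}(f)]\le\IC_\mu(\tilde\pi)=\IC_\mu(\pi)$. Letting $\eta\to0$ finishes. The key idea you are missing is this global choice of $\eta$ together with the explicit truncation, which replaces your unbounded adaptive iteration by a clean stopping rule.
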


The proof of Theorem~\ref{thm:LocalCharNoErrorWeak} is presented in Section~\ref{sec:fixed-proof}.

\subsection{The local characterization in a different form}   \label{sec:local-char-different}

Information cost measures the amount of information that is revealed by communicated bits. The local concavity conditions in Section~\ref{sec:local-char} become more natural if they are represented in terms of the amount of information that is \emph{not} revealed. Define the \emph{concealed information} and \emph{external concealed information} of a protocol $\pi$ with respect to $\mu$, respectively, as
$$\CI_\mu(\pi)  =H(X|\Pi Y)+H(Y|\Pi X) = H(X|Y)+H(Y|X)-\IC_\mu(\pi),$$
and
$$\CI^\ext_\mu(\pi) = H(XY|\Pi) = H(XY)-\IC^\ext_\mu(\pi),$$
where $\Pi$ is the transcript of $\pi$.

\begin{remark}
In the setting of multi-party number-in-hand model, we have
$$\CI_\mu(\pi)  = \sum_{i=1}^k H(X|\Pi X_i) =\left(\sum_{i=1}^k H(X|X_i)\right)  -\IC_\mu(\pi),$$
and
$$\CI^\ext_\mu(\pi) = H(X|\Pi) = H(X)-\IC^\ext_\mu(\pi),$$
where $X=(X_1,\ldots,X_k)$ is the random vector of all inputs.
\end{remark}

By using concealed information rather than information cost, the local characterization turns into a condition about the local concavity of the function.

\begin{lemma}
Inequalities \eqref{eq:localExternalUpper} and \eqref{eq:localInternalUpper} are respectively equivalent to
$$ \CI_\mu^\ext(f) \ge \Ex_B [ \CI^\ext_{\mu_B}(f)], \qquad \mbox{and} \qquad   \CI_\mu(f) \ge \Ex_B[ \CI_{\mu_B}(f)].$$
\end{lemma}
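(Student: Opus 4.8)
The plan is to treat the lemma as a purely definitional restatement, pivoting on the identities $\CI_\mu(\pi) = H(X|Y)+H(Y|X) - \IC_\mu(\pi)$ and $\CI^\ext_\mu(\pi) = H(XY) - \IC^\ext_\mu(\pi)$ recorded just above the lemma. Since the entropy terms do not depend on $\pi$, passing to the infimum over protocols performing the task (and correspondingly the supremum in the definition of $\CI_\mu(f)$) gives $\CI_\mu(f) = H(X|Y)+H(Y|X) - \IC_\mu(f)$ and $\CI^\ext_\mu(f) = H(XY) - \IC^\ext_\mu(f)$, and likewise with $\mu$ replaced by $\mu_B$.

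First I would record the two entropy conservation identities produced by sending the bit $B$. By the chain rule, $H(X|Y) = \MI(B;X|Y) + H(X|YB)$ with $H(X|YB) = \Ex_B[H_{\mu_B}(X|Y)]$; adding the analogous identity for $H(Y|X)$ yields
$$H(X|Y) + H(Y|X) = \MI_B + \Ex_B\big[H_{\mu_B}(X|Y) + H_{\mu_B}(Y|X)\big],$$
and similarly $H(XY) = \MI(B;XY) + H(XY|B) = \MI^\ext_B + \Ex_B[H_{\mu_B}(XY)]$.

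Next I would substitute into \eqref{eq:localInternalUpper}. Writing $\IC_\mu(f) = (H(X|Y)+H(Y|X)) - \CI_\mu(f)$ and $\IC_{\mu_B}(f) = (H_{\mu_B}(X|Y)+H_{\mu_B}(Y|X)) - \CI_{\mu_B}(f)$, inequality \eqref{eq:localInternalUpper} becomes
$$(H(X|Y)+H(Y|X)) - \CI_\mu(f) \le \MI_B + \Ex_B\big[H_{\mu_B}(X|Y)+H_{\mu_B}(Y|X)\big] - \Ex_B[\CI_{\mu_B}(f)].$$
By the conservation identity the sum on the left equals $\MI_B$ plus the averaged sum on the right, so these cancel, leaving $-\CI_\mu(f) \le -\Ex_B[\CI_{\mu_B}(f)]$, i.e. $\CI_\mu(f) \ge \Ex_B[\CI_{\mu_B}(f)]$. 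Every step is an equality or a sign flip, so the implication runs in both directions, giving the claimed equivalence. The external case is identical, using $H(XY) = \MI^\ext_B + \Ex_B[H_{\mu_B}(XY)]$ in place of the internal identity; and the multi-party versions follow verbatim from the formulas in the Remark, replacing $H(X|Y)+H(Y|X)$ by $\sum_{i=1}^k H(X|X_i)$ and $H(XY)$ by $H(X)$.

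The only delicate point — the "main obstacle", such as it is — is the bookkeeping around the infimum/supremum: one must verify that $\CI_\mu(f)$ is defined as the supremum of $\CI_\mu(\pi)$ over exactly the same family of protocols used to define $\IC_\mu(f)$, so that the $\mu$-dependent but $\pi$-independent entropy offset legitimately turns $\inf_\pi \IC_\mu(\pi)$ into $\sup_\pi \CI_\mu(\pi)$. Once this correspondence is in place the argument is just the chain rule applied termwise.
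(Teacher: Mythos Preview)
Your proposal is correct and follows essentially the same route as the paper: both arguments substitute the chain-rule identity $\MI(B;XY)=H(XY)-H(XY|B)$ (and its internal analogue) into the inequality, use $\CI_\mu(f)=H(XY)-\IC_\mu(f)$, and cancel the entropy terms to obtain the concavity form. Your write-up is in fact a bit more explicit than the paper's about the $\inf/\sup$ correspondence defining $\CI_\mu(f)$, which the paper leaves implicit.
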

\begin{proof} Substituting $I(B;XY)=H(XY)-H(XY|B)$ in $\IC_\mu^\ext(f) \le I(B;XY) + \Ex_B [\IC^\ext_{\mu_B}(f)]$ leads to
$$\CI_\mu^\ext(f) \ge H(XY|B) - \Ex_B [H(XY|B=b) - \CI^\ext_{\mu_B}(f)]$$
which simplifies to the desired $\CI_\mu^\ext(f) \ge  \Ex_B [\CI^\ext_{\mu_B}(f)]$.

Similarly substituting $I(B;X|Y)+I(B;Y|X)=H(X|Y)- H(X|YB) + H(Y|X) -H(Y|XB)$ in $\IC_\mu(f) \le I(B;X|Y)+I(B;Y|X) +\Ex_B [\IC_{\mu_B}(f)]$ leads to
$$\CI_\mu(f) \ge H(X|YB)+ H(Y|XB) - \Ex_B[H(X|YB=b)+H(Y|XB=b)-\IC_{\mu_B}(f)]$$
which simplifies to  $\CI_\mu(f) \ge  \Ex_B[ \CI_{\mu_B}(f)]$.
\end{proof}

\section{Communication protocols as random walks on $\Delta(\cX \times \cY)$}   \label{sec:randomwalk}

Consider a protocol $\pi$ and a prior distribution $\mu$ on the set of inputs $\cX \times \cY$. Suppose that in the first round Alice  sends a random signal $B$ to Bob. We can interpret this as  a random update of  the prior distribution $\mu$ to a new distribution $\mu_0 = \mu|_{B=0}$ or $\mu_1 = \mu|_{B=1}$ depending on the value of $B$. It is not difficult to see that $\mu_b(x,y) =p_b(x) \mu(x,y)$ for $b=0,1$, where $p_b(x)=\frac{\Pr[B=b|x]}{\Pr[B=b]}$. In other words, $\mu_b$ is obtained by multiplying the rows of $\mu$ by non-negative numbers. Similarly if Bob is sending a message, then  $\mu_b$ is obtained by multiplying the columns of $\mu$ by the numbers $p_b(y)=\frac{\Pr[B=b|y]}{\Pr[B=b]}$. That is  $\mu_b(x,y) = \mu(x,y)p_b(y)$. Therefore, we can think of a protocol as a random walk on $\Delta(\cX \times \cY)$ that starts at $\mu$, and every time that a player sends a message, it moves to a new distribution. Note further that this random walk is without drift as $\mu=\Ex_B [ \mu_B]$.

Let $\Pi$ denote the transcript of the protocol. When the protocol terminates, the random walk stops at $\mu_\Pi := \mu|_\Pi$. Since $\Pi$ itself is a random variable, $\mu_\Pi$ is a random variable that takes values in  $\Delta(\cX \times \cY)$.  Interestingly, both the internal and external information costs of the protocol depend only on the distribution of $\mu_\Pi$ (this is a distribution on the set $\Delta(\cX \times \cY)$, which itself is a set of distributions). To see this, note  $\MI(X;\Pi|Y) = \Ex_{\pi \sim \Pi, y \sim Y}\Dvg(X|_{\Pi=\pi, Y=y} \| X|_{Y=y})$ and $\MI(XY;\Pi) = \Ex_{\pi \sim \Pi} \Dvg(XY|_{\Pi=\pi} \| XY)$, and thus both of these quantities are determined by $\mu$ and $\mu_\Pi$. This immediately leads to the following observation:

\begin{proposition} \cite{MarkComputable}   \label{prop:diff-protocol-same-cost}
\label{prop:randomWalk}
Let $\pi$ and $\tau$ be two communication protocols with the same input set $\cX \times \cY$ endowed with a probability measure $\mu$. Let $\Pi$ and $T$  denote the transcripts of $\pi$ and $\tau$, respectively. If $\mu_\Pi$ has the same distribution as  $\mu_T$, then $\IC_\mu(\pi)=\IC_\mu(\tau)$ and $\IC^\ext_\mu(\pi)=\IC^\ext_\mu(\tau)$.
\end{proposition}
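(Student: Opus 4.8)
The plan is to show that each of $\IC_\mu(\pi)$ and $\IC^\ext_\mu(\pi)$ can be written as $\Ex[\Phi(\mu_\Pi)]$ for a fixed functional $\Phi\colon\Delta(\cX\times\cY)\to\R_{\ge 0}$ depending only on the prior $\mu$ (and on whether we are in the internal or external case). Granting this, the proposition is immediate: if $\mu_\Pi$ and $\mu_T$ have the same law as random variables taking values in $\Delta(\cX\times\cY)$, then $\Ex[\Phi(\mu_\Pi)]=\Ex[\Phi(\mu_T)]$, so the two protocols have the same information costs. It is worth recording at the outset that, by our convention, the transcript $\Pi$ includes the public random string, so $\mu_\Pi=\mu|_{\Pi}$ really is the posterior on the inputs given everything the observer sees, and no separate argument about public versus private randomness is needed.

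For the external cost I would start from $\IC^\ext_\mu(\pi)=\MI(XY;\Pi)=\Ex_{\pi\sim\Pi}\Dvg\big(XY|_{\Pi=\pi}\,\big\|\,XY\big)$, the identity already recorded before the statement. Here $XY|_{\Pi=\pi}$ is distributed as $\mu_\pi:=\mu|_{\Pi=\pi}$, which is precisely the point of $\Delta(\cX\times\cY)$ at which the random walk sits on transcript $\pi$, and $XY$ is distributed as $\mu$. Hence $\IC^\ext_\mu(\pi)=\Ex_{\pi\sim\Pi}\Phi^\ext(\mu_\pi)$ with $\Phi^\ext(\nu):=\Dvg(\nu\,\|\,\mu)$, a functional that does not see $\pi$ at all beyond the value $\mu_\pi$.

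For the internal cost I would split the weight $\Pr[\Pi=\pi,Y=y]$ as $\Pr[\Pi=\pi]\cdot\mu_\pi^Y(y)$, where $\mu_\pi^Y$ is the $Y$-marginal of $\mu_\pi$, and note that $X|_{\Pi=\pi,Y=y}$ is the conditional of $\mu_\pi$ given $Y=y$ while $X|_{Y=y}$ is the conditional of $\mu$ given $Y=y$. Then
$$\MI(X;\Pi|Y)=\Ex_{\pi\sim\Pi}\Big[\sum_y \mu_\pi^Y(y)\,\Dvg\big(\mu_\pi(\cdot\,|\,Y{=}y)\,\big\|\,\mu(\cdot\,|\,Y{=}y)\big)\Big]=\Ex_{\pi\sim\Pi}\,\Phi_1(\mu_\pi),$$
for the fixed functional $\Phi_1(\nu):=\sum_y \nu^Y(y)\,\Dvg\big(\nu(\cdot\,|\,Y{=}y)\,\big\|\,\mu(\cdot\,|\,Y{=}y)\big)$, and symmetrically $\MI(Y;\Pi|X)=\Ex_{\pi\sim\Pi}\,\Phi_2(\mu_\pi)$. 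Adding gives $\IC_\mu(\pi)=\Ex[(\Phi_1+\Phi_2)(\mu_\Pi)]$, which again depends only on the law of $\mu_\Pi$. The same manipulation goes through coordinate by coordinate in the number-in-hand model, using the decompositions $\IC_\mu(\pi)=\sum_i I(\Pi;X|X_i)$ from Section~\ref{sec:multipartyModel}.

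I do not expect a genuine obstacle here; this is essentially a bookkeeping argument. The only points deserving care are: (a) rewriting the joint law of $(\Pi,Y)$ (resp.\ $(\Pi,X)$) as the law of $\Pi$ reweighted by marginals of $\mu_\pi$, so that the outer expectation is purely over $\Pi$; and (b) checking that every conditional distribution entering the divergences above is a deterministic function of the fixed $\mu$ together with the current posterior $\mu_\pi$, and hence that $\Phi^\ext$, $\Phi_1$, $\Phi_2$ are well defined on $\Delta(\cX\times\cY)$ independently of the protocol.
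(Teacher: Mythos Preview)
Your proposal is correct and matches the paper's own argument, which is the short paragraph immediately preceding the proposition: the paper observes that $\MI(XY;\Pi)=\Ex_{\pi\sim\Pi}\Dvg(XY|_{\Pi=\pi}\,\|\,XY)$ and $\MI(X;\Pi|Y)=\Ex_{\pi\sim\Pi,\,y\sim Y}\Dvg(X|_{\Pi=\pi,Y=y}\,\|\,X|_{Y=y})$ are determined by $\mu$ and $\mu_\Pi$, and then states the proposition as an immediate consequence. Your write-up is simply a more explicit version of the same bookkeeping, in particular making precise the step (a) of factoring $\Pr[\Pi=\pi,Y=y]=\Pr[\Pi=\pi]\,\mu_\pi^Y(y)$ that the paper leaves implicit.
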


Proposition~\ref{prop:randomWalk}  shows that in the context of information complexity, it does not matter how different the steps of two protocol are, and as long as they both  yield the same distribution on $\Delta(\cX \times \cY)$, they have the same internal and external information cost. Consequently, one can directly work with this random walk (or the resulting distribution on $\Delta(\cX \times \cY)$) instead of working with the actual  protocols. Indeed, let $\mathcal{C}^T_\mu(\Delta(\cX \times \cY))$ denote the set of all probability distributions on  $\Delta(\cX \times \cY)$ that can be obtained, starting from the distribution $\mu$, through communication protocols that perform a given communication task $T$. The information cost of performing the task $T$ is the infimum of the information costs of the distributions in  $\mathcal{C}^T_\mu(\Delta(\cX \times \cY))$. Although, as mentioned earlier, this infimum is not always attained, if one takes the closure of $\mathcal{C}^T_\mu(\Delta(\cX \times \cY))$ (under weak convergence) then one can replace the infimum with
minimum. For the $2$-bit $\AND$ function, the buzzers protocol of \cite{MR3210776} yields the distribution in the closure of $\mathcal{C}^T_\mu(\Delta(\cX \times \cY))$ that achieves the minimum information cost. The buzzers protocol is not a communication protocol, but one can consider it as the limit of a sequence of communication protocols. We believe that the following is an important open problem.

\begin{problem}
Define a  paradigm such that for every communication task $T$ and every measure $\mu$ on an input set $\cX \times \cY$, the set of distributions on $\Delta(\cX \times \cY)$ resulting from the protocols performing the task $T$ in this paradigm is exactly equal to  the closure of $\mathcal{C}^T_\mu(\Delta(\cX \times \cY))$.
\end{problem}

Partial progress towards resolving this problem has been made in ~\cite{DaganFilmus}, see also~\cite{DFHL}.

\subsection{A signal simulation lemma}  \label{sec:signal-simulation-lemma}

Here we prove a simulation lemma that will be useful in the proof of the local characterization theorems. We start by restating a splitting lemma from \cite{MR3210776}. We use the notation $[\mu_0,\mu_1]$ for the set of all convex combinations $\alpha \mu_0 + (1-\alpha) \mu_1$, where $\alpha \in [0,1]$.

\begin{lemma}[Splitting Lemma,~\cite{MR3210776}]
\label{lem:Splitting}
Consider $\mu \in \Delta(\cX \times \cY)$ and a  signal $B$ sent by one of the players, and let $\mu_b=\mu|_{B=b}$ for $b=0,1$. Consider $\rho_0,\rho_1 \in [\mu_0,\mu_1]$ and $\rho \in (\rho_0,\rho_1)$. There exists a signal $B'$ that the same player can send starting at $\rho$ such that $\rho_b=\rho|_{B'=b}$ for $b=0,1$.
\end{lemma}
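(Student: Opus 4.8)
The plan is to use the concrete description of signals from Section~\ref{sec:randomwalk}: a signal sent by Alice rescales the rows of the current distribution by nonnegative constants, and a signal sent by Bob rescales the columns. So assume without loss of generality that $B$ is sent by Alice (the other case is symmetric, with ``rows'' replaced by ``columns''). Then $\mu_b(x,y)=p_b(x)\,\mu(x,y)$ with $p_b(x)=\Pr[B=b\mid x]/\Pr[B=b]\ge 0$, and hence every $\nu\in[\mu_0,\mu_1]$ is again a nonnegative row-rescaling of $\mu$: writing $\nu=s\mu_0+(1-s)\mu_1$ gives $\nu(x,y)=q(x)\mu(x,y)$ with $q(x)=s\,p_0(x)+(1-s)\,p_1(x)\ge 0$. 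In particular I may write $\rho_i(x,y)=q_i(x)\mu(x,y)$ and $\rho(x,y)=q(x)\mu(x,y)$ for nonnegative functions $q_0,q_1,q$ of $x$ alone.

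The key step is then to observe that $\rho_0$ and $\rho_1$ are themselves nonnegative row-rescalings of $\rho$, so the same reasoning produces a signal $B'$ from $\rho$. Since $\rho\in(\rho_0,\rho_1)$ I may write $\rho=\lambda\rho_0+(1-\lambda)\rho_1$ with $\lambda\in(0,1)$, whence $q(x)=\lambda q_0(x)+(1-\lambda)q_1(x)$ for every $x$. For each $x$ with $q(x)>0$ this yields $\rho_b(x,y)=\big(q_b(x)/q(x)\big)\rho(x,y)$, and the factors satisfy $q_0(x)/q(x)\le 1/\lambda$ and $q_1(x)/q(x)\le 1/(1-\lambda)$ because $q(x)\ge\lambda q_0(x)$ and $q(x)\ge(1-\lambda)q_1(x)$. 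I then define $B'$ by: ``on input $x$ with $q(x)>0$, output $0$ with probability $\lambda q_0(x)/q(x)$ and $1$ with probability $(1-\lambda)q_1(x)/q(x)$'' (and arbitrarily on the $\rho$-null rows $q(x)=0$). These are legitimate probabilities summing to $1$, a short computation gives $\Pr[B'=0]=\lambda$, and $\rho|_{B'=b}(x,y)=\big(\Pr[B'=b\mid x]/\Pr[B'=b]\big)\rho(x,y)=q_b(x)\mu(x,y)=\rho_b(x,y)$, as required.

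The only delicate point, and the one I would watch, is the degenerate case of rows $x$ on which $\rho$ vanishes ($q(x)=0$): there one needs to know that $\rho_0$ and $\rho_1$ also vanish on that row, so that the rescaling factors are well defined and $B'$ is unambiguous. This is precisely where the hypothesis $\lambda\in(0,1)$ (that is, $\rho$ lies strictly between $\rho_0$ and $\rho_1$, not merely on the closed segment) is used: $q(x)=0$ together with $\lambda\in(0,1)$ and $q_0(x),q_1(x)\ge 0$ forces $q_0(x)=q_1(x)=0$. Beyond this, the argument is routine bookkeeping once the row-rescaling picture is in place, so I expect no substantial obstacle. (If $\rho_0=\rho_1$ the open segment $(\rho_0,\rho_1)$ is empty and there is nothing to prove.)
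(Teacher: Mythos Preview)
Your proof is correct. The paper itself does not prove this lemma: it simply cites \cite[Lemma 7.11]{MR3210776} and remarks that the original statement erroneously allowed $\rho\in[\rho_0,\rho_1]$ (closed interval). Your argument is the natural direct construction, and in particular you correctly pinpoint exactly where the open-interval hypothesis $\rho\in(\rho_0,\rho_1)$ is used---to force $q_0(x)=q_1(x)=0$ on any row where $q(x)=0$---which is precisely the minor error the paper flags in the original source. So your write-up is in fact more informative than what the paper provides.
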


Lemma \ref{lem:Splitting} is proved in~\cite[Lemma 7.11]{MR3210776}, there is a minor error in the original statement as it is claimed that the lemma holds for $\rho \in [\rho_0,\rho_1]$ where the interval is closed.

We are now ready to prove the signal simulation lemma, which says every signal can be perfectly simulated by a non-crossing unbiased $\epsilon$-weak signal sequence. This lemma generalizes \cite[Lemma 5.2]{MR3210776}.

\begin{lemma}[Signal Simulation]   \label{lem:signal-simulation}
Let $\epsilon>0$, and consider $\mu \in \Delta(\cX \times \cY)$ and a  signal $B$ sent by  one of the players. There exists a sequence of non-crossing unbiased  $\epsilon$-weak random signals $\mathcal{B}=(B_1 B_2 \ldots)$ that with probability $1$ terminates, and furthermore $\mu|_{\mathcal{B}}$ has the same distribution as $\mu|_B$.
\end{lemma}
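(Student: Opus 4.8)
The goal is to simulate an arbitrary signal $B$ by a sequence of non-crossing, unbiased, $\epsilon$-weak signals so that the induced random walk on $\Delta(\cX \times \cY)$ ends at exactly the same distribution $\mu|_B$ as a single step of $B$. The natural strategy is to view the single step $\mu \mapsto \{\mu_0,\mu_1\}$ as a two-point distribution supported on the endpoints of the segment $[\mu_0,\mu_1]$ through $\mu$, and to reach it gradually by many tiny steps along this segment, each of which moves the current distribution by only a small amount. Since information cost and the induced distribution on $\Delta(\cX\times\cY)$ depend only on where the walk ends (Proposition~\ref{prop:randomWalk}), it suffices to produce \emph{any} random walk, all of whose steps are legal signals for the appropriate player, that is a martingale starting at $\mu$ and ending at the two-point distribution $\tfrac12 \mu_0 + \tfrac12 \mu_1$ (weighted by $\Pr[B=0],\Pr[B=1]$; I will reduce to the unbiased case first by noting every signal can be made unbiased as in \cite[Lemma 5.2]{MR3210776} or by an explicit rebalancing).

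\textbf{Key steps.} First I would parametrize the segment: write points of $[\mu_0,\mu_1]$ by $\alpha \in [0,1]$, with $\mu$ corresponding to some $\alpha^\ast \in (0,1)$ (the interior, by the $\epsilon$-weak/unbiased normalization). Then I would run a simple random walk on the $\alpha$-line: at the current position $\alpha$, if $\alpha \notin \{0,1\}$, apply the Splitting Lemma (Lemma~\ref{lem:Splitting}) with $\rho_0,\rho_1$ chosen to be the two points $\alpha \pm \eta$ (clipped to $[0,1]$), where $\eta$ is small enough that the resulting signal $B'$ is $\epsilon$-weak and also small enough to respect non-crossing. Because the walk in $\alpha$ is a bounded martingale confined to $[0,1]$ and absorbed at the two endpoints, it terminates with probability $1$, and by the optional stopping theorem the terminal distribution over $\{0,1\}$ is exactly $(1-\alpha^\ast,\alpha^\ast)$ — i.e. the endpoint of the walk in $\Delta(\cX\times\cY)$ has the same distribution as $\mu|_B$. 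The second step is to verify the three adjectives are preserved along the way: unbiasedness is built into the choice of symmetric increments $\alpha \pm \eta$ around $\alpha$ (so $\Pr[B'=0]=\Pr[B'=1]=\tfrac12$ at each step); $\epsilon$-weakness follows because the multipliers $p_b(x) = \Pr[B'=b|x]/\Pr[B'=b]$ stay within $1\pm O(\eta)$ of $1$ when the two target distributions are within $O(\eta)$ of the current one, and I would make this quantitative to pin down how small $\eta$ must be as a function of $\epsilon$ and $\mu$; non-crossing holds because moving a bounded distance along the fixed segment $[\mu_0,\mu_1]$ changes each coordinate monotonically and by a controlled amount, so the ordering of masses is not disturbed once $\eta$ is small (here I should be slightly careful: non-crossing is a condition relative to the \emph{current} distribution at each step, not relative to $\mu$, so I would check that a sufficiently small step along any direction keeps the sign pattern of pairwise differences, which holds away from ties, and handle ties by choosing the step direction/size to not reverse them).

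\textbf{Main obstacle.} The delicate point is the interaction between ``small step size'' (needed for $\epsilon$-weakness) and ``termination with probability $1$'': a random walk with vanishing step size could in principle fail to be absorbed, or be absorbed only in the limit without an honest terminating protocol. I expect to resolve this by keeping the step size $\eta$ \emph{bounded below} (a fixed function of $\epsilon$ and of the geometry of the segment, not shrinking to $0$), so that the $\alpha$-walk is a genuine bounded-increment martingale on $[0,1]$ with absorbing barriers, which is absorbed in finite expected time; the only subtlety is near the endpoints, where $\alpha \pm \eta$ must be clipped to land exactly on $0$ or $1$ with the correct (martingale-preserving) probabilities, and one must check that such a clipped, possibly asymmetric final step can still be realized as an $\epsilon$-weak unbiased signal — which it can, because near an endpoint the two target distributions $\mu_0,\mu_1$-direction points are already very close together, making the required multipliers close to $1$. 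A secondary but routine obstacle is the reduction to the unbiased case at the outset and the bookkeeping to confirm that the \emph{aggregate} terminal law on $\Delta(\cX\times\cY)$, not merely its marginal on $\{0,1\}$, matches $\mu|_B$; this is immediate once one observes the walk never leaves the one-dimensional segment $[\mu_0,\mu_1]$ and is absorbed exactly at its two endpoints $\mu_0$ and $\mu_1$.
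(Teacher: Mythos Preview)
Your plan is essentially the paper's: parametrize the segment $[\mu_0,\mu_1]$, use the Splitting Lemma to take small symmetric steps along it, and argue absorption at the endpoints. The paper likewise splits the current $\mu_c$ into $(1-\lambda)\mu_c+\lambda\mu_b$ and $(1+\lambda)\mu_c-\lambda\mu_b$ (with $b$ the nearer endpoint), so the ``symmetric $\alpha\pm\eta$'' picture matches theirs.

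There is, however, a real gap in your treatment of non-crossing versus termination. You propose a step size $\eta$ bounded below (for absorption) and say ties will be handled ``by choosing the step direction/size to not reverse them.'' But those two demands conflict: if the current $\mu_c$ has $\mu_c(xy)<\mu_c(x'y')$ with the gap arbitrarily small, and the two coordinates cross along the segment, then \emph{any} step of size at least your fixed $\eta$ in one direction violates non-crossing, so the step size is forced to shrink to~$0$ and your bounded-increment martingale argument no longer gives termination. The paper faces exactly this tension and resolves it with an extra idea you are missing: it takes $\lambda$ to be the \emph{largest} value satisfying both the $\epsilon$-weakness bound and the non-crossing bound. When the non-crossing bound is the binding one, one of the two resulting distributions lands exactly on a point where some pair of coordinates is tied. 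The set $\Omega$ of such tie points on the segment is finite, so along any run of the protocol the non-crossing constraint can bind only $|\Omega|$ times; all other steps have $\lambda\ge\lambda_0>0$ uniformly. This yields a uniform lower bound on the probability of absorption within $\lceil 1/\lambda_0\rceil+|\Omega|$ steps, hence termination with probability~$1$. Without this ``land on a tie, and there are only finitely many ties'' mechanism, your argument does not close.

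A smaller issue: your ``clipped, possibly asymmetric final step near an endpoint'' cannot be both unbiased and a martingale step unless the two targets are equidistant from $\mu_c$. The paper avoids this by always stepping toward the \emph{nearer} endpoint and allowing $\lambda=1$, which lands one branch exactly on that endpoint while the reflected branch $2\mu_c-\mu_b$ stays in the segment; this keeps every step unbiased.
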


\begin{proof}
Let $\mu_0=\mu|_{B=0}$ and $\mu_1=\mu|_{B=1}$. The following protocol explains how the sequence $(B_1 B_2 \ldots)$ is constructed from the signal $B$.

\begin{framed}
\begin{itemize}
\item Set $\mu_c = \mu$ and $i=1$;
\item Repeat until $\mu_c=\mu_0$ or $\mu_c=\mu_1$;
\item \qquad If $\mu_c \in [\mu_0,\mu]$, then
\item \qquad \qquad Set $\lambda$ to be the largest value in $[0,1]$ satisfying
\item \qquad \qquad \qquad $\lambda \max_{xy} \frac{|\mu_c(x,y)-\mu_0(x,y)|}{\mu_c(x,y)} \le \epsilon$, and
\item \qquad \qquad \qquad $\lambda |\mu_0(x,y)-\mu_0(x',y')-\mu_c(x,y)+\mu_c(x',y')| \le \mu_c(x',y') - \mu_c(x,y)$ if $\mu_c(x,y)<\mu_c(x',y')$.
\item \qquad \qquad Send a signal $B_i$ that splits $\mu_c$ to $(1-\lambda) \mu_c+\lambda \mu_0$ and $(1+\lambda) \mu_c -\lambda \mu_0$;
\item \qquad If $\mu_c \in (\mu,\mu_1]$, then
\item \qquad \qquad Set $\lambda$ to be the largest value in $[0,1]$ satisfying
\item \qquad \qquad \qquad $\lambda \max_{xy} \frac{|\mu_c(x,y)-\mu_1(x,y)|}{\mu_c(x,y)} \le \epsilon$, and
\item \qquad \qquad \qquad $\lambda |\mu_1(x,y)-\mu_1(x',y')-\mu_c(x,y)+\mu_c(x',y')| \le \mu_c(x',y') - \mu_c(x,y)$ if $\mu_c(x,y)<\mu_c(x',y')$.
\item \qquad \qquad Send a signal $B_i$ that splits $\mu_c$ to $(1-\lambda) \mu_c+\lambda \mu_1$ and $(1+\lambda) \mu_c -\lambda \mu_1$;
\item \qquad Update $\mu_c$ to the current distribution;
\item \qquad Increase $i$;
\end{itemize}
\end{framed}

Note that every signal $B_i$ sent in the above protocol is $\epsilon$-weak and non-crossing. Indeed, if $B_i$ splits $\mu_c$ into $(1-\lambda) \mu_c+\lambda \mu_0$ and $(1+\lambda) \mu_c -\lambda \mu_0$, then
\begin{align*}
|\Pr[B_i=0 | xy]-\Pr[B_i=1 | xy]| 
&= \left|\frac{\mu_c(xy | B_i=0)}{2\mu_c(xy)} - \frac{\mu_c(xy | B_i=1)}{2\mu_c(xy)} \right| \\
&= \lambda \frac{|\mu_c(xy)-\mu_0(xy)|}{\mu_c(xy)},
\end{align*}
and the choice of $\lambda$ guarantees that this is bounded by $\epsilon$. The same calculation shows the $\epsilon$-weakness for $\mu_c \in [\mu,\mu_1]$. It can also be easily verified that the signal is  non-crossing.

To see that this sequence terminates with probability $1$, define
\begin{align*}
\Omega
&= \{ \nu \in [\mu_0,\mu_1] : \exists\ (x,y),(x',y') \ \text{s.t.} \  \nu(x,y)=\nu(x',y'), \phantom{\}} \\
&\phantom{\quad \{} \text{while}\  \mu_0(x,y) \neq \mu_0(x',y') \ \text{or}\  \mu_1(x,y) \neq \mu_1(x',y') \},
\end{align*}
and notice that $\Omega$ is a finite set. Consider $\mu_c \in [\mu_0,\mu]$. If the value of $\lambda$ is set by the first condition, then there is a uniform lower-bound for $\lambda$:
$$\lambda \ge \lambda_0:= \epsilon /\max_{xy} \frac{|\mu(x,y)-\mu_0(x,y)|}{\mu(x,y)} =\epsilon  / \max_{xy} \frac{|\mu(x,y)-\mu_1(x,y)|}{\mu(x,y)}>0.$$
Moreover if $\lambda$ is set by the other  condition, then it means $\mu_c(x,y)<\mu_c(x',y')$, and  at least one of $\mu_c|_{B_i=0}$ or $\mu_c|_{B_i=1}$  belongs to $\Omega$. Hence starting at any point $\mu_c$, the random walk terminates with probability at least $2^{-\lceil 1/\lambda_0 \rceil + |\Omega|}$ after $\lceil 1/\lambda_0 \rceil+ |\Omega|$ steps. It follows that with probability $1$, the random walk terminates.
\end{proof}

\subsection{Proofs of Theorem~\ref{thm:LocalCharNoErrorWeak}}  \label{sec:fixed-proof}

We present the proof for the internal case only as the external case is similar.

\begin{lemma}  \label{lem:induction-lemma}
Let $w, \delta(\mu)$ and $C$ be as in Theorem \ref{thm:LocalCharNoErrorWeak}, and suppose $C$ satisfies Conditions (i), (ii) and (iii). Let $\tau$ be a protocol that terminates with probability $1$, and further assume $\tau$ is in normal form and every signal sent in $\tau$ is $\epsilon$-weak. Given a probability distribution $\mu \in \Delta(\cX \times \cY)$, for every node $u$ in the protocol tree of $\tau$, let $\mu_u$ be the probability distribution conditioned on the event that the protocol reaches $u$. If $\mu$ satisfies $w(\delta(\mu_u)) \ge \epsilon$ for every internal node $u$, then
$$C(\mu) \le \IC_\mu(\tau) + \Ex_{\ell} [C(\mu_\ell)],$$
where the expected value is over all leaves $\ell$ of $\tau$ chosen according to the distribution (on the leaves) when the inputs are sampled according to $\mu$.
\end{lemma}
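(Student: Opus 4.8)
The plan is to prove the inequality first for protocols of finite depth, by induction on the depth, and then obtain the general case (where $\tau$ may have unbounded depth but terminates almost surely) by truncation. The key preliminary observation is that Condition~(iii) of Theorem~\ref{thm:LocalCharNoErrorWeak} is applicable at every internal node of $\tau$: if $u$ is an internal node and $B$ is the signal sent there, then $B$ is unbiased and non-crossing with respect to $\mu_u$ because $\tau$ is in normal form, and $B$ is $w(\delta(\mu_u))$-weak because it is $\epsilon$-weak and $\epsilon\le w(\delta(\mu_u))$ by hypothesis; hence Condition~(iii) applied to $\mu_u$ gives $C(\mu_u)\le I_B+\Ex_B[C(\mu_{u,B})]$, where $\mu_{u,0},\mu_{u,1}$ are the distributions at the two children of $u$. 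This is precisely the point at which the weak-signal relaxation of Theorem~\ref{thm:LocalCharNoErrorWeak} is used, and it is legitimate here only because of the running hypothesis $w(\delta(\mu_u))\ge\epsilon$.

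I would then establish the lemma under the extra assumption that $\tau$ has depth at most $d$, by induction on $d$. The case $d=0$ is trivial. For the inductive step, let $B_1$ be the signal at the root, let $\mu_{B_1=0},\mu_{B_1=1}$ be the distributions at its children, and let $\tau_0,\tau_1$ be the subprotocols rooted there. Each $\tau_b$ has depth at most $d-1$, is in normal form, has all signals $\epsilon$-weak, terminates with probability $1$, and inherits the hypothesis $w(\delta(\cdot))\ge\epsilon$ at all its internal nodes, so the inductive hypothesis applies to $(\tau_b,\mu_{B_1=b})$. Feeding the two resulting bounds into the single-step inequality at the root, and using the chain rule $\IC_\mu(\tau)=I_{B_1}+\Ex_{B_1}[\IC_{\mu_{B_1}}(\tau_{B_1})]$ (an immediate consequence of the chain rule for mutual information, valid regardless of which player sends $B_1$, and identical in the multiparty model), together with the fact that the leaves of $\tau$ decompose as the disjoint union of the leaves of $\tau_0$ and $\tau_1$ with matching reach probabilities and conditional distributions, yields $C(\mu)\le\IC_\mu(\tau)+\Ex_\ell[C(\mu_\ell)]$.

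For general $\tau$, let $\tau^{(d)}$ be the protocol that runs $\tau$ but forces termination at every node of depth $d$. Then $\tau^{(d)}$ has depth at most $d$ and satisfies all the hypotheses (its internal nodes are internal nodes of $\tau$), so the finite-depth case gives $C(\mu)\le\IC_\mu(\tau^{(d)})+\Ex_{\ell\in\tau^{(d)}}[C(\mu_\ell)]$. Moreover $\IC_\mu(\tau^{(d)})\le\IC_\mu(\tau)$ since the transcript of $\tau^{(d)}$ is a function of that of $\tau$. The leaves of $\tau^{(d)}$ are of two kinds: leaves of $\tau$ of depth at most $d$, whose contribution is $\sum_{\ell:\ \mathrm{depth}(\ell)\le d}\Pr[\text{reach }\ell]\,C(\mu_\ell)$ and increases to $\Ex_{\ell\in\tau}[C(\mu_\ell)]$ as $d\to\infty$ by monotone convergence (using $C\ge 0$); and depth-$d$ nodes still internal in $\tau$, whose total reach probability equals $\Pr[\,\text{length of the }\tau\text{-execution}>d\,]$, which tends to $0$ as $d\to\infty$ since $\tau$ terminates almost surely, and on which $C\le\log|\cX\times\cY|$. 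Letting $d\to\infty$ gives $C(\mu)\le\IC_\mu(\tau)+\Ex_{\ell\in\tau}[C(\mu_\ell)]$.

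I expect the last step, controlling the contribution of the unbounded-depth part of $\tau$, to be the only real obstacle: it is here that the boundedness of $C$ by $\log|\cX\times\cY|$ and the almost-sure termination of $\tau$ are both essential. The finite-depth induction is routine once the single-step inequality — and with it the crucial applicability of Condition~(iii) to the $\epsilon$-weak signals of $\tau$ — is in place.
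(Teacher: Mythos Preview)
Your proposal is correct and follows essentially the same approach as the paper: first observe that the hypothesis $w(\delta(\mu_u))\ge\epsilon$ makes Condition~(iii) applicable at every internal node, establish the finite-depth case by induction (the paper defers this to \cite[Lemma~5.6]{MR3210776}, while you spell out the chain-rule step), then truncate at depth $d$, use $\IC_\mu(\tau^{(d)})\le\IC_\mu(\tau)$, bound the forced-termination leaves by $\Pr[\text{not yet terminated}]\cdot\log|\cX\times\cY|$, and let $d\to\infty$.
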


\begin{proof}
For every internal node $u$, the assumption in the statement of the lemma implies that the signal sent from $u$ is $w(\delta(\mu_u))$-weak. Hence Condition (iii) shows that the claim is true if $\tau$ is a $1$-bit protocol, and thus by a simple induction (See \cite[Lemma 5.6]{MR3210776}) it is true if $\tau$ is a $c$-bit protocol for any $c < \infty$.

Now assume $\tau$ has infinite depth. Consider a large integer $c$, obtain $\tau_c$ by truncating $\tau$ after $c$ bits of communication, trivially $\IC_\mu(\tau_c) \le \IC_\mu(\tau)$. Let $G_c$ denote the set of the leaves of $\tau_c$ in which the protocol is forced to terminate. Let $\cL_c$ be the set of leaves in $\tau$ with depth at most $c$. Clearly, the set of leaves in $\tau_c$ is exactly $G_c \cup \cL_c$. As $\tau_c$ has a bounded depth, we have
$$
C(\mu) \le \IC_\mu(\tau_c) + \Ex_{\ell \in \cL_c \cup G_c} [C(\mu_\ell)] \le \IC_\mu(\tau) + \Ex_{\ell \in \cL_c \cup G_c} [C(\mu_\ell)].
$$
Let $\Pi_c$ denote the transcript of $\tau_c$. As $\tau$ terminates with probability $1$, given any $\alpha > 0$, one can guarantee $\Pr[\Pi_c(xy) \in G_c]<\alpha$ for every $xy$ by choosing $c$ to be sufficiently large. Hence
$$
C(\mu) \le \IC_\mu(\tau) + \Ex_{\ell \in \cL_c} [C(\mu_\ell)] + \alpha \log(|\cX \times \cY|).
$$
Taking the limit $\alpha \to 0$ shows $C(\mu) \le \IC_\mu(\tau) + \Ex_{\ell \in \cL} [C(\mu_\ell)]$ where $\cL$ is the set of all leaves of $\tau$.
\end{proof}

\begin{proof}[Proof of Theorem \ref{thm:LocalCharNoErrorWeak}]
Firstly by (ii), $\delta(\mu) = 0$ implies $C(\mu) = \IC_\mu(f) \le \IC_\mu(f)$. Hence assume $\delta(\mu) > 0$. Consider an arbitrary signal $B$ sent by Alice.  As we discussed before, one can interpret $B$ as a one step random walk that starts  at $\mu$ and jumps  either to $\mu_0$ or to $\mu_1$ with corresponding probabilities $\Pr[B=0 | X=x]$ and $\Pr[B=1 | X=x]$. The idea behind the proof is to use Lemma \ref{lem:signal-simulation} to simulate this random jump with a random walk that has smaller steps so that we can apply the concavity assumption of the theorem to those steps.

Let $\pi$ be a protocol that  computes $[f, 0]$. For $0< \eta < \delta(\mu)$, applying Lemma~\ref{lem:signal-simulation} one gets a new protocol $\tilde{\pi}$ by replacing every signal sent in $\pi$ with a random walk consisting of $w(\eta)$-weak non-crossing unbiased signals. Note $\tilde{\pi}$ terminates with probability $1$. Moreover, since $\tilde{\pi}$ is a perfect simulation of $\pi$, by Proposition~\ref{prop:diff-protocol-same-cost} we have $\IC_\mu(\pi)=\IC_\mu(\tilde{\pi})$.

For every node $v$ in the protocol tree of $\tilde{\pi}$, let $\mu_v$ be the measure $\mu$ conditioned on the event that the protocol reaches the node $v$. Obtain $\tau$ from $\tilde{\pi}$ by terminating at every node $v$ that satisfies $\delta(\mu_v) \le \eta$. Note that by the construction, Condition (iii) is satisfied on every internal node $v$ of $\tau$, as every such node satisfies $\eta < \delta(\mu_v)$, thus $w(\eta) \le w(\delta(\mu_v))$ implying the signal sent on node $v$ is $w(\delta(\mu_v))$-weak.  Hence by Lemma~\ref{lem:induction-lemma}, 
$$C(\mu) \le \IC_\mu(\tau) + \Ex_{\ell} [C(\mu_\ell)],$$
where the expected value is over all leaves of $\tau$. For every $\mu_\ell$, let $\mu'_\ell \in \Omega$ be a distribution such that $\delta(\mu_\ell) = |\mu_\ell - \mu'_\ell|$. By Conditions (i) and (ii), and the uniform continuity of $\IC_\mu(f)$, we have that for every $\epsilon > 0$ there exists $\eta > 0$, such that for all $\mu_\ell$, as long as $\delta(\mu_\ell) = |\mu_\ell - \mu'_\ell| \le \eta$, then
\[   \nonumber
C(\mu_\ell) \le C(\mu'_\ell) + \epsilon = \IC_{\mu'_\ell}(f) + \epsilon \le \IC_{\mu_\ell}(f) + \epsilon + \epsilon = \IC_{\mu_\ell}(f) + 2\epsilon.
\]
As a result,
\[ \nonumber
C(\mu) \le \IC_\mu(\tau) + \Ex_{\ell} [\IC_{\mu_\ell}(f) + 2\epsilon] = \IC_\mu(\tau) + \Ex_{\ell} [\IC_{\mu_\ell}(f)] + 2\epsilon.
\]
Since $\mu_\ell$ is generated by truncating $\tilde{\pi}$, we have
\[ \nonumber
\IC_\mu(\tau) + \Ex_{\ell} [\IC_{\mu_\ell}(f)] \le \IC_\mu(\tilde{\pi}) = \IC_\mu(\pi).
\]
Therefore $C(\mu) \le \IC_\mu(\pi) + 2\epsilon$. As this holds for arbitrary $\epsilon$, we must have $C(\mu) \le \IC_\mu(\pi)$.
\end{proof}

\section{The multiparty $\AND$ function in the number-in-hand model}

In~\cite{MR3210776} it is shown that in the two-player setting, a certain (unconventional) protocol that we refer to as the buzzers protocol, has optimal information and external information cost for the $2$-bit $\AND$ function. In this section we show that the buzzers protocol can be generalized to an optimal protocol for the multi-party $\AND$ function in the number-in-hand model (assuming Assumption~\ref{assumption}).

For the sake of brevity, we denote $\mu_x:=\mu(\{x\})$  for every $x \in \{0,1\}^k$. Furthermore we assume that $\mu_{\e_1} \le \ldots \le \mu_{\e_k}$. The protocol is given by having buzzers with waiting times which have independent exponential distributions, and start at times $t_1,\ldots,t_k$  for players $1,\ldots,k$, respectively.  Although the protocol $\pi_\mu^\wedge$ described in Figure~\ref{fig:prot} is not a conventional communication protocol, it can be easily approximated by discretization and truncation of  time.

\begin{figure}[h]
\caption{The  protocol $\pi_\mu^\wedge$ for solving the $\AND$ function on a distribution $\mu$.\label{fig:prot}}
\begin{framed}
\begin{itemize}
\item There is a  clock whose time starts at $0$ and  increases continuously to $+\infty$.
\item  Let $t_i:= \ln \frac{\mu_{\e_i}}{\mu_{\e_1}}$ for $i=1,\ldots,k$, and let $t_{k+1}:=\infty$.

\item For every $i=1,\ldots,k$, if $x_i=0$, then the $i$-th player privately picks an independent random variable $T_i$ with exponential distribution, and if time reaches $t_i+T_i$, the player announces that his/her input is $0$, and the protocol terminates immediately with all the players knowing that $\bigwedge_{i=1}^k x_i=0$.
\item If the clock reaches $+\infty$ without any player announcing their input, the players will know that $\bigwedge_{i=1}^k x_i=1$.
\end{itemize}
\end{framed}
\end{figure}

Recall that the exponential distribution is memoryless, and intuitively can be generated in the following manner: Consider a buzzer starting at time $t=0$. At every infinitesimal time interval of length $dt$, independently of the past, it buzzes with probability $dt$, and then stops. The waiting time for a buzz to happen has exponential distribution.

Thus we can describe $\pi_\mu^\wedge$ as in the following: For every $i \in [k]$,  if $x_i=0$, the $i$-th player activates a buzzer at time $t_i$. When the first buzz happens the protocol terminates, and the players decide that $\bigwedge_{i=1}^k x_i = 0$. If the time reaches $\infty$ without anyone buzzing, they decide $\bigwedge_{i=1}^k x_i = 1$. In Theorem~\ref{thm:mainMultiparty} we show that for the measures $\mu$ that satisfy Assumption~\ref{assumption}, the protocol $\pi^\wedge_\mu$ has optimal external and internal information cost.

\begin{theorem}[Main Theorem 2]
\label{thm:mainMultiparty}
For every $\mu$ satisfying Assumption~\ref{assumption}, the protocol $\pi^\wedge_\mu$ has the smallest external and internal information cost.
\end{theorem}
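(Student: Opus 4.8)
\textbf{Proof strategy for Theorem~\ref{thm:mainMultiparty}.}
The plan is to apply the fixed local characterization (Theorem~\ref{thm:LocalCharNoErrorWeak}) with $C(\mu) = \IC_{\pi_\mu^\wedge}(\mu)$ in the internal case, and $C(\mu) = \IC^\ext_{\pi_\mu^\wedge}(\mu)$ in the external case. The upper bound is then immediate, since $\pi_\mu^\wedge$ performs $[\AND,0]$; the content is in verifying the hypotheses. First I would set up the explicit formulas: using the random-walk viewpoint of Section~\ref{sec:randomwalk}, the buzzers protocol corresponds to a very structured random walk on $\Delta(\{0,1\}^k)$ supported (under Assumption~\ref{assumption}) on the simplex spanned by $\vec 0, \vec 1, \e_1,\ldots,\e_k$, and I would compute $\IC_{\pi_\mu^\wedge}(\mu)$ and $\IC^\ext_{\pi_\mu^\wedge}(\mu)$ as closed-form functions of the masses $\mu_{\vec 0}, \mu_{\vec 1}, \mu_{\e_1} \le \cdots \le \mu_{\e_k}$ by tracking which player buzzes first and integrating the resulting divergences against the exponential waiting-time densities. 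I expect these to reduce to sums of terms of the form (mass) times (binary-entropy-like expression), by analogy with the two-party computation in \cite{MR3210776}.

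Next I would check the three conditions of Theorem~\ref{thm:LocalCharNoErrorWeak}. Condition (ii) (the boundary condition) requires identifying $\Omega$ as the set of internal-trivial (resp.\ external-trivial) distributions for $\AND$; by the characterization of \cite{DFHL} these are the distributions where the walk has nothing to learn, and one checks directly that $\pi_\mu^\wedge$ has zero cost there, i.e.\ $C(\mu)=0=\IC_\mu(\AND)$. Condition (i), uniform continuity of $C$, should follow from the explicit formula together with continuity of the entropy-type terms, using that the relevant $t_i = \ln(\mu_{\e_i}/\mu_{\e_1})$ vary continuously away from the boundary and that near the boundary both $C$ and the bounding expressions tend to $0$; alternatively one invokes \eqref{eq:continMu}-style bounds applied to a fixed sequence of discretized approximating protocols. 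The crux is Condition (iii): the local concavity inequality $C(\mu) \le I_B + \Ex_B[C(\mu_B)]$ for every non-crossing, unbiased, $w(\delta(\mu))$-weak signal $B$ sent by some player. By the Splitting Lemma and the non-crossing/unbiased reduction, such a $B$ corresponds to perturbing the rows (for player $i$) of $\mu$ by a small symmetric amount, i.e.\ moving $\mu$ slightly along an edge of the simplex while staying inside it; in terms of concealed information, the inequality becomes local concavity of $\CI_{\pi^\wedge}(\cdot)$ along these directions.

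The main obstacle, as the authors flag, is exactly this step: in \cite{MR3210776} the analogous inequality was checked by Mathematica for the two-party case, but with $k$ players and an arbitrary number of variables one cannot brute-force it. The plan here is to exploit the structure of the buzzers protocol to reduce the verification to finitely many cases with a bounded number of free parameters. Concretely, a weak signal by player $i$ only changes the mass on the two coordinates that player $i$'s bit distinguishes, namely it moves mass between $\{\vec 1\}\cup\{\e_j : j\neq i\}$-type configurations and $\{\vec 0, \e_i\}$-type configurations; the key insight is that conditioning on the first buzz decomposes the cost into (a) the cost contributed before anyone buzzes, which depends only on a few order-statistics of the $\mu_{\e_j}$'s, and (b) the cost of a residual sub-instance that again has the form covered by Assumption~\ref{assumption} but with fewer effective players. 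This recursive structure should let us peel off players one at a time, so that the local concavity inequality collapses to a one- or two-variable inequality about the building-block entropy function — which is then safe to verify by a computer algebra check. I would organize the write-up by: (1) deriving the cost formulas; (2) proving Conditions (i) and (ii); (3) classifying the relevant weak signals via Splitting Lemma + monotonicity of the $\mu_{\e_i}$; (4) reducing (iii) to the base inequality through the recursive decomposition; (5) finishing the base case computationally; and (6) assembling via Theorem~\ref{thm:LocalCharNoErrorWeak}, doing the external case in parallel with $I_B$ replaced by $I^\ext_B$.
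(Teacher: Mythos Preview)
Your framework is right (apply Theorem~\ref{thm:LocalCharNoErrorWeak} with $C(\mu)$ the cost of $\pi_\mu^\wedge$), but there is a genuine gap in your choice of $\Omega$. Taking $\Omega$ to be only the trivial measures does not suffice: the measure with $\mu_{\vec 0}=0$ and $\mu_{\e_1}=\cdots=\mu_{\e_k}=1/k$ is \emph{not} trivial (it has strictly positive information cost), yet the local concavity verification breaks down there. Concretely, the paper's analysis Taylor-expands around the point $\mu_{\vec 0}f_{\vec 0}\approx 1-k\beta$, and the Lagrange remainder bound carries a factor $1/(1-k\beta)^2$ that blows up as $\mu_{\vec 0}\to 0$; so no choice of $w$ can make condition (iii) hold uniformly near that measure. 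The fix is to enlarge $\Omega$ to include this special measure and verify condition (ii) there by a separate, direct argument that $\pi_\mu^\wedge$ is exactly optimal at that one point (Claim~\ref{claim:protocol-is-opt-when-mu-i-are-all-equal} in the paper, proved by a short combinatorial rectangle argument). Your plan of ``checking $C(\mu)=0$ on $\Omega$'' would miss this entirely.

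Your proposed reduction for (iii) is also off-track. There is no ``residual sub-instance with fewer players'' after the first buzz: once anyone buzzes, $\pi_\mu^\wedge$ terminates. The paper's reduction is not recursive in the number of players; it is \emph{local in time}. The key observation is that an $\epsilon$-weak unbiased signal $B$ by player $s$ leaves all activation times $t_j$ with $j\neq s$ unchanged and shifts $t_s$ by $-\gamma_0$ or $+\gamma_1$ (both $O(\epsilon)$). Hence the three protocols $\pi_\mu^\wedge,\pi_{\mu^0}^\wedge,\pi_{\mu^1}^\wedge$ differ only on the short window $[-\gamma_0,\gamma_1]$ around $t_s$: the contribution from $(-\infty,-\gamma_0]$ is nonnegative by a conditional-entropy argument (Claim~\ref{claim:less-than-ts-1-contributes-nonnegative}), and the contribution from $[\gamma_1,\infty)$ is exactly zero (Claim~\ref{claim:greater-than-gamma1-contributes-0}). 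Inside the window, conditioning on $t\ge t_s-\gamma_0$ forces $\mu_{\e_1}=\cdots=\mu_{\e_{s-1}}$, and a separate reduction (Claim~\ref{claim:mu-k-Equals-mu-s}) lets one set $\mu_{\e_s}=\cdots=\mu_{\e_k}$; this collapses the measure to essentially four free parameters $(k,s,\beta,\epsilon)$. Only then does one Taylor-expand to order $\epsilon^3$ and hand the resulting coefficient (a rational function of $k,s,\beta$) to Mathematica. So the ``constant number of variables'' you are after comes from time-localization plus symmetrization of the $\mu_{\e_j}$, not from peeling players.
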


In order to prove Theorem~\ref{thm:mainMultiparty}, we need to verify that the concavity conditions of Theorem~\ref{thm:LocalCharNoErrorWeak} are satisfied. Consider the measure $\mu$ that is uniformly distributed over $\e_1,\ldots,\e_k$. That is $\mu_{\vec{0}} = 0$ and $\mu_{\e_1} = \ldots = \mu_{\e_k} = 1/k$. Note that when the protocol $\pi^\wedge$ is executed on this measure, all the players become active at time $0$, and as time proceeds, they do not obtain any new information about the inputs of the other players until one of the players buzzes. Then at that point the input is revealed to all the players. Due to this discrete nature of the corresponding random-walk on $\Delta(\cX \times \cY)$, we need to analyze this particular measure separately, and afterwards when verifying the concavity conditions, we can let $\Omega$ in the statement of Theorem~\ref{thm:LocalCharNoErrorWeak} include this measure. Claim~\ref{claim:protocol-is-opt-when-mu-i-are-all-equal} below verifies Theorem~\ref{thm:mainMultiparty} for this particular measure.

\begin{myclaim}   \label{claim:protocol-is-opt-when-mu-i-are-all-equal}
Let $\mu$ be the measure that $\mu_{\vec{0}} = 0$ and $\mu_{e_1} = \ldots = \mu_{e_k} = 1/k$.
The internal and external information cost of the protocol $\pi^\wedge$ is optimal with respect to $\mu$.
\end{myclaim}

\begin{proof}
First we  present the proof for the  external information complexity. Let $\pi$ be any protocol that solves the multi-party AND function correctly, and let $t$ be a possible transcript of the protocol. First note that it is not possible to have $\Pr[\Pi_{e_m} = t] > 0$  for all $1 \le m \le k$. Indeed by rectangle property this would imply $\Pr[\Pi_{\vec{1}} = t] > 0$, and since  the correct output for $\vec{1}$ is different from that of $e_1,\ldots,e_k$, we would get a contradiction with the assumption that $\pi$ solves AND correctly on all inputs. Hence to every transcript $t$, we can assign a $j(t) \in \{1,\ldots,m\}$ with $\Pr[\Pi_{e_j}=t]=0$. Now for a random $X \sim \mu$, denote $J=j(\Pi_X)$, and notice that conditioned on $J=j$, $X$ is supported  on the set $\{e_1,\ldots,e_k\} \setminus \{e_j\}$ of size $k-1$, and thus $\entropy(X|_J) \le \log (k-1)$. Consequently, we have 
$$\IC_{\mu}^\ext(\pi)=\MI(X;\Pi_X) = \MI(X;\Pi_X J) \ge \MI(X; J) = \entropy(X)-\entropy(X|J) \ge \log k - \log (k-1).$$

On the other hand, consider our protocol $\pi_\mu^\wedge$. Note that under $\mu$, all players are activated at the same time, and consequently by symmetry, for every termination time $\tau$ and player $j \in \{1,\ldots,k\}$, the random variable $X|_{\Pi_X=(\tau,j)}$ is uniformly distributed on $\{e_1,\ldots,e_k\} \setminus \{e_j\}$. Hence $\entropy(X| \Pi_X)= \log(k-1)$. We conclude that 
$$\IC_{\mu}^\ext(\pi^\wedge)=\MI(X;\Pi_X) = \entropy(X) - \entropy(X|\Pi_X) = \log k - \log (k-1).$$

Next we turn to the internal case. Again, let $\pi$ be any protocol that solves the multi-party AND correctly, and let $J$ be defined as above. First note that for $i \in [k]$, $X|_{X_i=1}$ is supported on the single point $\{e_i\}$ and  $X|_{X_i=0}$ is uniformly distributed on $\{e_1,\ldots,e_k\} \setminus \{e_i\}$. Hence
$$\entropy(X|X_i)= \frac{1}{k} \entropy(X|X_i=1)+ \frac{k-1}{k} \entropy(X|X_i=0)=\frac{k-1}{k} \log(k-1).$$
Moreover for $i,j \in [k]$, $X|_{J=j,X_i=0}$ is supported on $\{e_1,\ldots,e_k\} \setminus \{e_i,e_j\}$. Hence using $\Pr[J=i]=\Pr[J=i, X_i=0]$, we have 
\begin{align*}
\entropy(X|JX_i) &=  \sum_{j=1}^k \Pr[J=j, X_i=0]  \entropy(X|_{J=j,X_i=0}) \\ 
&\le \sum_{j=1}^k  \Pr[J=j, X_i=0]  \log |\{e_1,\ldots,e_k\} \setminus \{e_i,e_j\}| \\
&= \frac{k-1}{k} \log(k-2) + \Pr[J=i] (\log(k-1)-\log(k-2)).
\end{align*}
Summing over $i$, we obtain
$$ \sum_{i=1}^k \entropy(X|JX_i) = (k-2) \log(k-2) + \log(k-1), $$
and thus
\begin{align*}
\IC_{\mu}(\pi)
&=\sum_{i=1}^k \MI(X;\Pi_X|X_i) = \sum_{i=1}^k \MI(X;\Pi_X J|X_i)\\ 
&\ge \sum_{i=1}^k \MI(X; J|X_i) = \sum_{i=1}^k \entropy(X|X_i)-\entropy(X|JX_i) \\ 
&\ge (k-1) \log(k-1) -  ((k-2) \log(k-2)+\log(k-1)) \\
&= (k-2)(\log(k-1)-\log(k-2)).
\end{align*}

On the other hand, for the protocol $\pi_\mu^\wedge$, by symmetry, for every termination time $\tau$ and player $j \in \{1,\ldots,k\}$, the random variable $X|_{\Pi_X=(\tau,j), X_i=0}$ is uniformly distributed on $\{e_1,\ldots,e_k\} \setminus \{e_j,e_i\}$. Hence 
$$ \entropy(X| \Pi_X X_i)=\frac{1}{k} \log(k-1) + \frac{k-2}{k} \log(k-2).$$
We conclude that 
\begin{align*}
\IC_{\mu}(\pi^\wedge)
&= \sum_{i=1}^k \MI(X;\Pi_X|X_i) = (k-1)\log(k-1) - \sum_{i=1}^k \entropy(X|\Pi_X X_i)\\
&= (k-2)(\log(k-1)-\log(k-2)).  \qedhere 
\end{align*}
\end{proof}

\begin{myclaim}
\label{claim:No1}
It suffices to verify Theorem~\ref{thm:mainMultiparty} for measures $\mu$ with $\mu(\vec{1})=0$. 
\end{myclaim} 

\begin{proof}
Let $\mu$ be a measure satisfying Assumption~\ref{assumption}, and let $\pi$ be a protocol that solves the multiparty AND function correctly on all the inputs.  Let $\Pi$ denote the transcript of this protocol, and let $B=1_{[X = \vec{1}]}$. Since $\pi$ solves the AND function correctly,  $\Pi$ determines the value of $B$. We have 
\begin{align*}
\IC_\mu^\ext(\pi)&=\MI(X;\Pi_X) = \MI(X B_X;\Pi_X) = \MI(B_X;\Pi_X)+ \MI(X;\Pi_X|B_X) \\
&=0+ \Pr[X=\vec{1}]\MI(X;\Pi_X|X=\vec{1})+ \Pr[X \neq \vec{1}]\MI(X;\Pi_X|X \neq \vec{1}) \\
&=\Pr[X \neq \vec{1}]\MI(X;\Pi_X|X \neq \vec{1})=(1-\mu_{\vec{1}}) \IC_{\mu'}^\ext(\pi),
\end{align*}
where $\mu'$ is the measure $\mu$ conditioned on the event that the input is not  equal to $\vec{1}$. Similarly 
$$\IC_\mu(\pi)=(1-\mu_{\vec{1}})\IC_{\mu'}(\pi).$$
Finally, to conclude the claim, note that $\pi_\mu^\wedge$ and $\pi_{\mu'}^\wedge$ are identical as $\mu_{\e_i}/\mu_{\e_1}=\mu'_{\e_i}/\mu'_{\e_1}$ for all $i=1,\ldots,k$.  
\end{proof}

\section{Proof of Theorem~\ref{thm:mainMultiparty}}

In this section we prove Theorem~\ref{thm:mainMultiparty} by verifying the concavity conditions of Theorem~\ref{thm:LocalCharNoErrorWeak}. Let $\mu$ be a measure satisfying Assumption~\ref{assumption},  and $X=(X_1,\dots, X_k)$ denote the  random $k$-bit input. Let $\Pi$ be the random variable corresponding to the transcript of the protocol $\pi^\wedge_{\mu}$. Let $\Pi_x=\Pi|_{X=x}$.

To verify the concavity condition, we consider a signal $B$  with parameter $\epsilon$ sent by the player $s$. That is
$$\Pr[B=0 | X_s=0]=\frac{1+\epsilon \Pr[X_s=1]}{2},$$ 
and
$$\qquad \Pr[B=1 | X_s=1]=\frac{1+\epsilon \Pr[X_s=0]}{2}.$$
Note that $\Pr[B=0]=\Pr[B=1]=\frac{1}{2}$, i.e., the signal $B$ is unbiased. Let $\mu^0$ and $\mu^1$ respectively denote the distributions of $X^0 := X|_{B=0}$ and $X^1 := X|_{B=1}$. We have $\mu= \frac{\mu^0 + \mu^1}{2}$. Let $\Pi^0$ and $\Pi^1$ denote the random variables corresponding to the transcripts of $\pi^\wedge_{\mu^0}$ and $\pi^\wedge_{\mu^1}$, respectively.

Note that the transcript of $\pi_\mu^\wedge$ contains the termination time $t$, and if $t<\infty$, also the name of the player who first buzzed. We denote by $\pi_\infty$ the transcript corresponding to termination time $t=\infty$, and by $\pi_t^m$ the termination time $t<\infty$ with the $m$-th player buzzing.
 
 For $t \in [0,\infty)$, let $\Phi_x(t)$ denote the total amount of active time spent by all players before time $t$ if the input is $x$. For  $t_r \le t < t_{r+1}$, we have
$$\Phi_x(t) =\sum_{i: x_i=0} \max(t-t_i,0) = \sum_{i \in [1,r], x_i=0} t-t_i.$$
The probability density function $f_x$ of $\Pi_x$ is given by
$$f_{x}(\pi_t^m) = \left\{
\begin{array}{lcl}
0 &\qquad& \mbox{$t_m>t$ or $x_m=1$}  \\
e^{-\Phi_x(t)} & & \mbox{otherwise}
\end{array}
\right.,
$$
and $\Pr(\Pi_{\vec{1}}=\pi_\infty)=1$. The distribution of the transcript $\Pi$ is then
$$f(\pi_t^m) =\sum_x \mu_x f_{x}(\pi_t^m).$$
Define $f^0$, $f^0_x$ and $f^1$, $f^1_x$ analogously for $\pi^{\wedge}_{\mu^0}$ and $\pi^{\wedge}_{\mu^1}$, respectively.

\subsection{Probability distributions $\mu^0$ and $\mu^1$} \label{sec:distribution-mu0-mu1}
Denote $\beta_s :=\Pr[X_s=1]$, and $\zeta_s :=\Pr[X_s=0]$. For $B=0$, we have,
\begin{align*}
\mu^{0}_x =& \left\{
\begin{array}{lcl}
(1+\epsilon \Pr[x_s=1]) \mu_x = (1+\epsilon \beta_s) \mu_x  &\qquad& x_s=0  \\
(1-\epsilon \Pr[x_s=0]) \mu_x = (1-\epsilon \zeta_s) \mu_x & & x_s=1
\end{array}
\right.
\end{align*}
Consequently, the new starting times are $t^0_i=t_i$ for $i \neq s$, and $t^0_s= t_s - \gamma_0$ where
\[    \label{eqn:gamma-0}
\gamma_0=\ln\left(\frac{1+\epsilon \beta_s}{1-\epsilon \zeta_s}\right).
\]
Hence
$$\mu^{0}_x f^0_x(\pi_t^m) =
\left\{
\begin{array}{lcl}
\mu^{0}_x f_x(\pi_t^m) &\qquad  & t<t_s-\gamma_0 \\
(1+\epsilon \beta_s)\mu_x f_x(\pi_t^m) e^{-(t-t_s+\gamma_0)} & & t \in [t_s-\gamma_0,t_s), x_s=0, m \neq s \\
(1-\epsilon \zeta_s) \mu_x f_x(\pi_t^m)  & & t \in [t_s-\gamma_0,t_s), x_s=1, m \neq s \\
\mu^{0}_x f^0_x(\pi_t^s) & & t \in [t_s-\gamma_0,t_s), m=s \\
(1-\epsilon \zeta_s) \mu_x f_x(\pi_t^m) & & t\ge t_s \\
\end{array}
\right.
$$

On the other hand, for $B=1$, we have,
\begin{align*}
\mu^{1}_x =& \left\{
\begin{array}{lcl}
(1-\epsilon \beta_s) \mu_x &\qquad& x_s=0  \\
(1+\epsilon \zeta_s) \mu_x & & x_s=1
\end{array}
\right.
\end{align*}
Consequently the new starting times are $t^1_i=t_i$ for $i \neq s$, and $t^1_s= t_s + \gamma_1$ where
\[    \label{eqn:gamma-1}
\gamma_1=\ln\left(\frac{1+\epsilon \zeta_s}{1-\epsilon \beta_s}\right).
\]
Hence when $m \neq s$,
$$\mu^{1}_x f^1_x(\pi_t^m) =
\left\{
\begin{array}{lcl}
\mu^{1}_x f_x(\pi_t^m) &\qquad  & t\le t_s \\
(1- \epsilon \beta_s)\mu_x f_x(\pi_t^m) e^{t-t_s} & & t \in [t_s,t_s+\gamma_1), x_s=0 \\
(1+\epsilon \zeta_s) \mu_x f_x(\pi_t^m)  & & t \in [t_s,t_s+\gamma_1), x_s=1 \\
(1+\epsilon \zeta_s) \mu_x f_x(\pi_t^m) & & t\ge t_s+\gamma_1 \\
\end{array}
\right.
$$
For $m = s$,
$$\mu^{1}_x f^1_x(\pi_t^s) =
\left\{
\begin{array}{lcl}
(1+\epsilon \zeta_s) \mu_x f_x(\pi_t^s)  &\qquad&  t > t_s + \gamma_1 \text{\ and\ } x_s=0  \\
0  & & \text{otherwise}.
\end{array}
\right.
$$

\subsection{Setting up and first reductions} \label{sec:Some-reductions}

We set $\Omega$ to be the set of all external (resp. internal) trivial measures together with the measure in Claim~\ref{claim:protocol-is-opt-when-mu-i-are-all-equal}, and in the external case we set $w(x) = ck^{-20} x^4$ for some fixed constant $c > 0$ (one may need to pick a different $w(x)$ for internal case). Using the memoryless property of exponential distribution, we can shift the activation time of all the players by $-\ln(\mu_{\e_s}/\mu_{\e_1})$, and assume that $t_1=-\ln(\mu_{\e_s}/\mu_{\e_1}), \ldots, t_s=0, \ldots, t_k=\ln(\mu_{\e_k}/\mu_{\e_s})$.
 
Let $\phi(x) := x \ln (x)$. Using the notion of concealed information from Section~\ref{sec:local-char-different}, the concavity conditions of Theorem~\ref{thm:LocalCharNoErrorWeak} reduce to verifying 
\begin{align}
\int_{-\infty}^\infty &\sum_m \left(   \phi(f(\pi^m_t)) - \frac{\phi(f^0(\pi^m_t))+\phi(f^1(\pi^m_t))}{2}   \right) \nonumber \\ 
&-\sum_m \sum_x  \left( \phi(\mu_x f_x(\pi^m_t))-\frac{\phi(\mu^0_x f^0_x(\pi^m_t))+\phi(\mu^1_x f^1_x(\pi^m_t))}{2} \right) dt \ge 0, \label{eq:ExternalGoal}
\end{align}
for the external case, and
\begin{align}
\sum_{j=1}^k  \int_{-\infty}^\infty &\sum_m \sum_{b=0}^1 \left( \phi(f_{x_j=b}(\pi^m_t)) - \frac{\phi(f_{x_j=b}^0(\pi^m_t))+\phi(f_{x_j=b}^1(\pi^m_t))}{2} \right) \nonumber \\ 
& -\sum_m \sum_x  \left( \phi(\mu_x f_x(\pi^m_t))-\frac{\phi(\mu^0_x f^0_x(\pi^m_t))+\phi(\mu^1_x f^1_x(\pi^m_t))}{2} \right) dt \ge 0, \label{eq:InternalGoal}
\end{align}
for the internal case, where
$$f_{x_j=b}(\pi^m_t):= \sum_{X: X_j=b} \mu_X f_X(\pi^m_t),$$
and 
$$\qquad f^0_{x_j=b}(\pi^m_t):= \sum_{X: X_j=b} \mu^0_X f^0_X(\pi^m_t), \qquad f^1_{x_j=b}(\pi^m_t):= \sum_{X: X_j=b} \mu^1_X f^1_X(\pi^m_t).$$

Denote the function inside the integral of (\ref{eq:InternalGoal}) by $\concav_\mu(t,j)$, and the function inside the integral of (\ref{eq:ExternalGoal}) by $\concav_\mu^\ext(t)$. Note further that by Claim~\ref{claim:No1} we can assume that $\mu_{\vec{1}}=0$.  Hence our goal reduces to show the following:

\begin{statement}[First reduction]
\label{stat:1st}
To prove Theorem~\ref{thm:mainMultiparty} it suffices to assume $\mu$ satisfies $\mu(\vec{1})=0$, and verify 
$$\int_{-\infty}^\infty \concav_\mu^\ext(t) dt \ge 0 \qquad \mbox{and} \qquad \sum_{j=1}^k \int_{-\infty}^\infty \concav_\mu(t,j) dt \ge  0.$$
\end{statement}
Recall we assumed $t_s=0$ by shifting the time. The next two claims show that one only needs to focus on the interval $[-\gamma_0,\gamma_1]$. 
\begin{myclaim}  \label{claim:less-than-ts-1-contributes-nonnegative}
We have $ \int_{-\infty}^{- \gamma_0} \concav_\mu^\ext(t) dt \ge 0$ and $\sum_{j=1}^k  \int_{-\infty}^{- \gamma_0} \concav_\mu(t,j)dt \ge 0$.
\end{myclaim} 
\begin{proof}
Observe that $\Pi, \Pi^0$ and $\Pi^1$ are identical up to time $- \gamma_0$. Let $\Pi_P$ denote a similar  protocol, with the only difference that in $\Pi_P$ at time $t=- \gamma_0$ all the players reveal their inputs. Then,
$$\int_{-\infty}^{- \gamma_0} \concav_\mu^\ext(t) dt
= \entropy(X|\Pi_P)-\entropy(X|\Pi_P, B) \ge 0,$$
and
\[   \nonumber
\sum_{j=1}^k \int_{-\infty}^{- \gamma_0} \concav_\mu(t,j)dt
= \sum_{j=1}^k (\entropy(X|X_j, \Pi_P)-\entropy(X|X_j, \Pi_P, B)) \ge 0. \tag*{\qedhere}
\]
\end{proof}

\begin{myclaim}   \label{claim:greater-than-gamma1-contributes-0}
We have $ \int_{ \gamma_1}^\infty \concav_\mu^\ext(t) dt = 0$ and $\sum_{j=1}^k  \int_{ \gamma_1}^\infty \concav_\mu(t,j)dt = 0$.
\end{myclaim} 
\begin{proof}
We use the formula in \eqref{eq:ExternalGoal} by integrating in the corresponding range $[\gamma_1, \infty)$. As $t \ge \gamma_1$, plug in $\mu_x^0 f_x^0(\pi_t^m) = (1-\epsilon \zeta_s) \mu_x f_x(\pi_t^m)$ and $\mu_x^1 f_x^1(\pi_t^m) = (1+\epsilon \zeta_s) \mu_x f_x(\pi_t^m)$, a simple calculation shows $ \int_{\gamma_1}^\infty \concav_\mu^\ext(t) dt = 0$. Similarly one can calculate the internal case.
\end{proof}

\begin{statement}[Second reduction]
\label{stat:2nd}
To prove Theorem~\ref{thm:mainMultiparty} it suffices to assume $\mu$ satisfies $\mu(\vec{1})=0$, and verify 
$$
\int_{-\gamma_0}^{\gamma_1} \concav_\mu^\ext(t) dt \ge 0 
\qquad 
\mbox{and} 
\qquad 
\sum_{j=1}^k \int_{-\gamma_0}^{\gamma_1} \concav_\mu(t,j) dt \ge  0.$$
\end{statement}

\begin{remark}
The computation result of the two-party $\AND$ done in  \cite[Section 7.7]{MR3210776} shows the concavity term (the one that we want to verify its non-negativity) can be of order $\epsilon^2$.  One will see in Section \ref{sec:multiparty-code} that our computation gives only an order $\epsilon^3$. This is because we choose to focus our computation, as allowed by Statement \ref{stat:2nd}, on the interval $[-\gamma_0, \gamma_1]$ only. Claim \ref{claim:epsilon-2-contribution} below shows an order $\epsilon^2$ term can appear if the whole range is considered. 
\end{remark}

\begin{myclaim}  \label{claim:epsilon-2-contribution}
Suppose $s \ge 2$ and $L = |t_{s-1}| > 0$. If $\gamma_0 \le L/2$, then
\[   \label{eqn:epsilon-2-bound-Ext}
\int_{t_{s-1}}^{-\gamma_0} \concav_\mu^\ext(t) dt 
\ge
\frac{(1 - e^{-(s-1)L/2}) \mu_{\vec{0}} \mu_{e_s} }{2(s-1)} \epsilon^2 \ge 0,
\]
and
\[   \label{eqn:epsilon-2-bound-Int}
\sum_{j=1}^k \int_{t_{s-1}}^{-\gamma_0} \concav_\mu(t,j) dt
\ge
\frac{(k-1)(1 - e^{-(s-1)L/2}) \mu_{\vec{0}} \mu_{e_s} }{2(s-1)} \epsilon^2 \ge 0.
\]
\end{myclaim} 

\begin{proof}
Consider external case first. Let $\mu'$ be defined as
$$
\mu'_x =
\begin{cases}
\beta \mu_x, &\quad x_s = 0, \\
-\zeta \mu_x, &\quad x_s = 1.
\end{cases}
$$
Then $\mu^0= \mu + \epsilon \mu'$ and $\mu^1= \mu - \epsilon \mu'$, hence $f^0(\pi^m_t)= f(\pi^m_t)+ \epsilon \sum \mu'_x f_x(\pi^m_t)$ and   $f^1(\pi^m_t)= f(\pi^m_t)- \epsilon \sum \mu'_x f_x(\pi^m_t)$.
Note that $f(\pi^m_t) = 0$ (in our case this happens when $m \ge s$) implies $\concav_\mu^\ext(t) = 0$. On the other hand, when $f(\pi^m_t) \neq 0$ (i.e., $1 \le m \le s-1$), using Taylor expansion at the point $f(\pi^m_t)$ for functions $\phi(f^0(\pi^m_t))$ and $\phi(f^1(\pi^m_t))$, and expansion at $\mu_x f_x(\pi^m_t)$ for functions $\phi(\mu^0_x f^0_x(\pi^m_t))$ and $\phi(\mu^1_x f^1_x(\pi^m_t))$, we obtain (note here we won't have $\epsilon^3$)
\begin{align}   \label{eqn:epsilon-square-term-Ext}
\begin{split}
\concav_\mu^\ext(t)
&\ge 
\sum_{m = 1}^{s-1} \left(\sum_x  \frac{(\mu'_x f_x(\pi^m_t))^2}{2\mu_x f_x(\pi^m_t)} - \frac{(\sum \mu'_x f_x(\pi^m_t))^2}{2f(\pi^m_t)} \right)\epsilon^2 \\
&=
\frac{1}{2} \sum_{m = 1}^{s-1} \left(\mu_{e_s} f_{e_s}(\pi^m_t) \left(1 - \frac{\mu_{e_s} f_{e_s}(\pi^m_t)}{f(\pi^m_t)} \right)  \right) \epsilon^2  \\
&\ge \frac{1}{2} \sum_{m = 1}^{s-1} \left(\mu_{e_s} f_{e_s}(\pi^m_t) \frac{\mu_{\vec{0}} f_{\vec{0}}(\pi^m_t)}{f(\pi^m_t)} \right) \epsilon^2 \ge 0.
\end{split}
\end{align}

By Statement \ref{stat:3rd} one can assume $\mu_{e_s} = \cdots = \mu_{e_k}$ and $\mu_{e_1} = \cdots = \mu_{e_{s-1}} = \mu_{e_s} e^{-L}$, thus $\mu_{\vec{0}} = 1 - (k-s+1) \mu_{e_s} - (s-1) \mu_{e_s} e^{-L}$. Then $t_{s-1} = 0$ and $t_s = L$. As $\gamma_0 \le L/2$ implies $t_s - \gamma_0 = L - \gamma_0 \ge L/2$, and \eqref{eqn:epsilon-square-term-Ext} says the integrand is non-negative, hence a lower bound is given by the integration of \eqref{eqn:epsilon-square-term-Ext} in the range $[0, L/2]$. For  $t \in [0, L/2]$ and $1 \le m \le s-1$, we have $f_{\vec{0}} (\pi_t^m) = f_{e_s} (\pi_t^m) = \ldots =f_{e_k} (\pi_t^m) = e^{-(s-1)t}$, and $f_{e_1} (\pi_t^m) = \ldots =f_{e_{s-1}} (\pi_t^m) = e^{-(s-2)t}$, thus $f(\pi_t^m) = (1 - (s-1) \mu_{e_s} e^{-L} + (s-2) \mu_{e_s} \e^{-L} e^t) e^{-(s-1)t} \le (s-1)e^{-(s-1)t}$. Hence,
\[  \nonumber
\eqref{eqn:epsilon-square-term-Ext}
\ge
\frac{s-1}{2} \mu_{e_s} f_{e_s} \frac{\mu_{\vec{0}} f_{\vec{0}} (\pi_t^m)}{(s-1)e^{-(s-1)t}} \epsilon^2
= \frac{\mu_{\vec{0}} \mu_{e_s} }{2} e^{-(s-1)t} \epsilon^2.
\]
Integrating in the range $[0, L/2]$ with respect to $t$ gives the desired bound \eqref{eqn:epsilon-2-bound-Ext}.

Similarly, in the internal case one has
\begin{align*}
\sum_{j=1}^k \concav_\mu(t,j)
&\ge
\frac{1}{2} \sum_{m = 1}^{s-1} \sum_{j=1,j\neq s}^k \left(\mu_{e_s} f_{e_s}(\pi^m_t) \frac{\mu_{\vec{0}} f_{\vec{0}}(\pi^m_t)}{f(\pi^m_t) - \mu_{e_j} f_{e_j}(\pi^m_t)} \right) \epsilon^2 \\
&\ge
\frac{k-1}{2} \sum_{m = 1}^{s-1} \left(\mu_{e_s} f_{e_s}(\pi^m_t) \frac{\mu_{\vec{0}} f_{\vec{0}}(\pi^m_t)}{f(\pi^m_t)} \right) \epsilon^2.
\end{align*}
Hence we get the bound \eqref{eqn:epsilon-2-bound-Int} after integration.
\end{proof}

\subsection{Futher reductions}    \label{sec:Further-reductions}

In this section we obtain a futher reduction of Statement~\ref{stat:2nd} that will have a constant number of variables and so one can finally verify it using Wolfram Mathematica:

\begin{statement}[Third reduction]
\label{stat:3rd}
To prove Theorem~\ref{thm:mainMultiparty} it suffices to assume $\mu$ satisfies 
\[   \label{eqn:measure-mu}
\mu_{e_1}=\dots=\mu_{e_{s-1}}=\beta, \quad
\mu_{e_{s}}= \dots =\mu_{e_k}= e^{\gamma_0} \beta, \quad
\mu_{\vec{0}} = 1-(s-1) \beta - (k-s+1) e^{\gamma_0} \beta,
\]
where $0 < \beta < 1$, and verify 
$$\int_{-\gamma_0}^{\gamma_1} \concav_\mu^\ext(t) dt \ge 0 \qquad \mbox{and} \qquad \sum_{j=1}^k \int_{-\gamma_0}^{\gamma_1} \concav_{\mu}(t,j) dt \ge  0.$$
\end{statement}

Statement~\ref{stat:3rd} follows from Claim \ref{claim:mu-k-Equals-mu-s} showing  that it suffices to consider measures $\mu$ such that $\mu_{e_j} = \mu_{e_s}$ for all $j \ge s$, together with the observation that conditioned on the buzz time $t \in [-\gamma_0, \gamma_1]$, we have $\mu_{e_1}|_{t \ge t_s - \gamma_0} = \cdots = \mu_{e_{s-1}}|_{t \ge t_s - \gamma_0}$.

\begin{myclaim}  \label{claim:SameAverage}
For every $z$, 
\begin{align*}
&\Pr\left[X=z \wedge t(\Pi_z) \in [-\gamma_0,\gamma_1] \right] \\
&=\frac{\Pr\left[X^0=z \wedge t(\Pi^0_z) \in [-\gamma_0,\gamma_1] \right]+\Pr\left[X^1=z \wedge t(\Pi^1_z) \in [-\gamma_0,\gamma_1] \right]}{2}.
\end{align*}
\end{myclaim} 

\begin{proof}
We need to show
$$\mu_z \sum_m \int_{-\gamma_0}^{\gamma_1} f_z(\pi_t^m) dt= \frac{1}{2} \left(\mu^0_z \sum_m \int_{-\gamma_0}^{\gamma_1} f_z^0(\pi_t^m) dt+ \mu_z^1 \sum_m \int_{-\gamma_0}^{\gamma_1} f_z^1(\pi_t^m) dt\right).$$

Recall that $\Phi_z(t)$ denotes the total amount of active time spent by all players before time $t$. The probability that $\Pi_z$ finishes in the interval $[-\gamma_0,\gamma_1]$  is equal to $$e^{-\Phi_z(-\gamma_0)} - e^{-\Phi_z(\gamma_1)}.$$
Denoting  by $\Phi^0_z(t)$ and $\Phi^1_z(t)$ the total active time for the protocols $\pi^\wedge_{\mu_0}$ and $\pi^\wedge_{\mu_1}$ on the input $z$, the claim is equivalent to
$$\mu_z \cdot (e^{-\Phi_z(-\gamma_0)} - e^{-\Phi_z(\gamma_1)}) = \frac{\mu^0_z \cdot (e^{-\Phi^0_z(-\gamma_0)} - e^{-\Phi^0_z(\gamma_1)})}{2} + \frac{\mu^1_z \cdot (e^{-\Phi^1_z(-\gamma_0)} - e^{-\Phi^1_Z(\gamma_1)})}{2}.$$
Since $\mu_z = \frac{\mu^0_z+\mu^1_z}{2}$ and $\Phi_z(-\gamma_0)=\Phi_z^0(-\gamma_0)=\Phi_z^1(-\gamma_0)$, the equality reduces to
$$\mu_z e^{-\Phi_z(\gamma_1)} = \frac{\mu^0_z e^{-\Phi^0_z(\gamma_1)}+\mu^1_z  e^{-\Phi^1_z(\gamma_1)}}{2}.$$ %
When $z_s=1$, $\Phi_z=\Phi_z^0=\Phi_z^1$, and thus $\mu_z = \frac{\mu^0_z+\mu^1_z}{2}$ verifies the equality. In the case of $z_s=0$, we have that $\Phi^0_z(\gamma_1)=\Phi_z(\gamma_1)+\gamma_0$, and $\Phi^1_z(\gamma_1)=\Phi_z(\gamma_1)-\gamma_1$. Substituting $\gamma_0 = \ln\left(\frac{1+\epsilon \beta_s}{1-\epsilon \zeta_s}\right)$, $\gamma_1 = \ln\left(\frac{1+\epsilon \zeta_s}{1-\epsilon \beta_s}\right)$,  $\mu^0_z=(1+\epsilon \beta_s) \mu_z$ and $\mu^1_z=(1-\epsilon \beta_s) \mu_z$ verifies the equality.
\end{proof}

\begin{myclaim}  \label{claim:mu-k-Equals-mu-s}
Suppose $\mu$ satisfies $\mu_{e_s} = \ldots = \mu_{e_{s+a}} < \mu_{e_{s+a+1}}$ for some $a \ge 0$ with $s+a+1 \le k$. Assume $\epsilon$ is sufficiently small so that $\gamma_1 \le t_{s+a+1}$. Let $\mu'$ be a measure for the $(s+a)$-player AND function, defined as: $\mu'_{\vec{0}} = \mu_0 + \sum_{j > s+a} \mu_{e_j}$, and $\mu'_{e_j} = \mu_{e_j}$ for $1 \le j \le s+a$. Then the following hold.
\begin{enumerate}[(1).]
\item $\int_{- \gamma_0}^{\gamma_1}  \concav_\mu^\ext(t) dt = \int_{- \gamma_0}^{\gamma_1} \concav_{\mu'}^\ext(t) dt$;
\item If $\sum_{j=1}^{s+a} \int_{- \gamma_0}^{\gamma_1} \concav_{\mu'}(t,j) dt  \ge 0$, then $\sum_{j=1}^{k} \int_{- \gamma_0}^{\gamma_1} \concav_{\mu}(t,j) dt  \ge 0$.
\end{enumerate}
\end{myclaim} 

This claim shows that to verify the concavity conditions \eqref{eq:ExternalGoal} and  \eqref{eq:InternalGoal} it suffices to consider only those measures satisfying $\mu_{e_j} = \mu_{e_s}$ for all $j > s$.

\begin{proof}
We confine ourselves in the interval $t \in [- \gamma_0, \gamma_1]$ throughout the proof. Let $f, f'$ and $\Pi, \Pi'$ denote the pdf and protocols for $\pi_\mu^\wedge$ and $\pi_{\mu'}^\wedge$, respectively.

\begin{enumerate}[(1).]
\item Obviously we have
\[   \label{eqn:temp1}
f'_{\vec{0}}(\pi_t^m) = f_{\vec{0}}(\pi_t^m), \quad\text{and\ }\quad f'_{e_j}(\pi_t^m) = f_{e_j}(\pi_t^m), \quad 1 \le j \le s+a.
\]
For the protocol $\pi_\mu^\wedge$, observe we have
\[   \label{eqn:temp2}
f_{\vec{0}}(\pi_t^m) = f_{e_j}(\pi_t^m), \quad j > s+a,
\]
for all $m = 1, \ldots, k$. Hence \eqref{eqn:temp1} and \eqref{eqn:temp2}  imply that $f(\pi_t^m) = f'(\pi_t^m)$. Clearly similar results hold for $\Pi^0, \Pi^1$ and $\Pi'^0, \Pi'^1$ . This imply that the first integral in \eqref{eq:ExternalGoal} does not change from $\mu$ to $\mu'$.

It remains to show that the second integral in \eqref{eq:ExternalGoal} does not change either. Expand this integral gives,
\begin{equation}    \label{eqn:temp3}
\begin{aligned}
&\int_{-\gamma_0}^{\gamma_1} \sum_X \sum_m \left( f_X(\pi_t^m) \mu_X \log(\mu_X) -\frac{\mu^0_X f^0_X(\pi_t^m) \log(\mu_X)+\mu^1_X f^1_X(\pi_t^m) \log(\mu_X)}{2} \right) dt \\
 +&\int_{-\gamma_0}^{\gamma_1} \sum_X \sum_m \Bigg(\mu_X \phi(f_X(\pi_t^m)) - \phantom{\Bigg)} \\
&\phantom{\Bigg(} \frac{\mu^0_X \left(\phi(f^0_X(\pi_t^m))+ f^0_X(\pi_t^m) \log(1+\epsilon \beta)\right) + \mu^1_X \left(\phi(f^1_X(\pi_t^m))+ f^1_X(\pi_t^m) \log(1-\epsilon \beta)\right)}{2} \Bigg) dt.
\end{aligned}
\end{equation}
By Claim~\ref{claim:SameAverage} the first integral in \eqref{eqn:temp3} is $0$. Hence it only remains to show the second integral in \eqref{eqn:temp3} does not change. But this is again a direct consequence of \eqref{eqn:temp1} and \eqref{eqn:temp2} with corresponding facts for $\Pi_X^0$ and $\Pi_X^1$.

\item By definition of the measures $\mu, \mu'$, one has
\[  \nonumber
\sum_{j=1}^{k}\int_{- \gamma_0}^{\gamma_1} \concav_{\mu}(t,j) dt 
-
\sum_{j=1}^{s+a} \int_{- \gamma_0}^{\gamma_1} \concav_{\mu'}(t,j) dt 
=
\sum_{j=s+a+1}^{k} \int_{- \gamma_0}^{\gamma_1} \concav_{\mu}(t,j) dt.
\]
Hence it suffices to show $\sum_{j=s+a+1}^{k}  \int_{- \gamma_0}^{\gamma_1} \concav_{\mu}(t,j) dt  \ge 0$.

Let $\mu_{X_j = b}$ denote the distribution $\mu$ of $X$ conditioned on $X_j = b$, one can check that
\[   \label{eqn:Connection-Ext-Int}
\int_{-\gamma_0}^{\gamma_1} \concav_{\mu}(t,j) dt =  
\Ex_{b} \int_{-\gamma_0}^{\gamma_1} \concav^\ext_{\mu_{X_j = b}} (t) dt.
\]
In section \ref{sec:ExtComputation} we show the external concavity condition indeed holds, hence \eqref{eqn:Connection-Ext-Int} implies $\int_{- \gamma_0}^{\gamma_1} \concav_{\mu}(t,j) dt  \ge 0$, as desired. \qedhere
\end{enumerate}
\end{proof}

\begin{proof}[Proof of Theorem~\ref{thm:mainMultiparty}]
We use $\wedge$ to denote the multiparty $\AND$ function.
Consider the external case first. Recall we set $\Omega$ to be the set of all external trivial measures together with the measure in Claim~\ref{claim:protocol-is-opt-when-mu-i-are-all-equal}, hence Condition (i) and (ii) are satisfied. Picking $w(x) = ck^{-20} x^4$ for some fixed constant $c > 0$, we verify Condition (iii)  in Section  \ref{sec:ExtComputation}. Hence $\IC^\ext_\mu(\pi^\wedge) \le \IC^\ext_\mu(\wedge)$, as $\IC^\ext_\mu(\pi^\wedge)$ is also an upper bound, hence we proved $\IC^\ext_\mu(\pi^\wedge) = \IC^\ext_\mu(\wedge)$. 

Similarly for the internal case the concavity Condition (iii) is verified in Section \ref{sec:IntComputation}. 
\end{proof}

\subsection{Information cost of multiparty AND function}   \label{sec:multiparty-code}
To simplify the notation, since every function has the argument $\pi_t^m$, we sometimes omit it from the writing while knowing it was there, such as we write $f$ to mean $f(\pi_t^m)$. We will use $\mu_0, \mu_j, f_0, f_j$ instead of $\mu_{\vec{0}}, \mu_{\e_{j}}, f_{\vec{0}}, f_{e_j}$ when there is no ambiguity, similar notations are used for measures $\mu^0, \mu^1$ and functions $f^0, f^1$.

\subsubsection{Taylor expansions} \label{sec:multi-party-Taylor}

Recall $\beta_s = \mu_s$ and $\zeta_s = 1 - \beta_s$. By the claims in Section \ref{sec:Further-reductions}, we can assume that
\[   \label{eqn:measure-mu}
\mu_1=\dots=\mu_{s-1}=\beta, \quad
\mu_s=\dots =\mu_k= e^{\gamma_0} \beta, \quad
\mu_0 = 1-(s-1) \beta - (k-s+1) e^{\gamma_0} \beta.
\]
Observe that $0 < \beta < 1/k$ (the measure when $\beta=0$ is both external and internal trivial). Furthermore, viewing $\gamma_0$ and $\gamma_1$ as functions of $\epsilon$, plugging $\beta_s = e^{\gamma_0} \beta$ and $\zeta_s = 1 - \beta_s = 1 - e^{\gamma_0}\beta$ into \eqref{eqn:gamma-0} and \eqref{eqn:gamma-1}, by implicit differentiation, we have
\[   \label{eqn:derivatives-of-gamma}
\begin{cases}
\gamma_0(0) = 0, \gamma_0'(0) = 1, \gamma_0''(0) = 1 - 2\beta,  \gamma_0'''(0) = 2 - 10 \beta + 8 \beta^2; \\
\gamma_1(0) = 0, \gamma_1'(0) = 1, \gamma_1''(0) = 2\beta - 1,  \gamma_1'''(0) = 2 + 6 \beta^2.
\end{cases}
\]
These derivatives are used in the Mathematica computation. To simplify the notation we let $\zeta = \zeta_s = 1 - e^{\gamma_0}\beta$.

Assuming $\epsilon<1/2$, then $\gamma_0+\gamma_1 \le 2 \ln (1+2\epsilon) \le 4 \epsilon$. Also note $|e^{-x}-1| \le x$  for $x \ge 0$, which together with the fact that $\Phi_x(t) \le k(\gamma_0+\gamma_1) \le 4 k \epsilon$ implies the following:
\begin{itemize}
\item $\mu_0 f_0(\pi^m_t)$ is either $0$ or close to $1-k\beta$ with distance bounded by $4k\epsilon$;
\item for $j\neq 0$, we have $\mu_j f_j(\pi^m_t)$ is either $0$ or close to $\beta$ with distance bounded by $4(k+1)\epsilon$;
\item $f(\pi^m_t)$ is either $0$ or close to $1-\beta$ with distance bounded by $16k^2\epsilon$;
\item for $j\neq 0$ and $j \neq m$, we have $f(\pi_t^m) - \mu_j f_j(\pi_t^m)$ is either $0$ or close to $1-2\beta$ with distance bounded by $16k^2\epsilon$.
\end{itemize}
Note that $f(\pi_t^m) = 0$ implies $\mu_x f_x(\pi_t^m) = 0$ for every $x$, hence $\phi(\cdot) = 0$ for all functions under consideration. On the other hand when $f(\pi_t^m) \neq 0$, then all $\mu_x f_x(\pi_t^m)$ are nonzero except when $x_m = 1$. In this case these functions $\phi(\cdot)$ have the following Taylor expansions at corresponding points as follows,
\begin{align*}
&\phi(\mu_0 f_0(\pi^m_t))
= -\frac{1-k\beta}{2} + (\ln(1-k\beta)) \mu_0 f_0(\pi^m_t) + \frac{(\mu_0 f_0(\pi^m_t))^2}{2(1-k\beta)} + O(\epsilon^3), \\
&\phi(\mu_j f_j(\pi^m_t))
= -\frac{\beta}{2} + (\ln\beta) \mu_j f_j(\pi^m_t) + \frac{(\mu_j f_j(\pi^m_t))^2}{2\beta} + O(\epsilon^3), \  j \neq 0, j\neq m \\
&\phi(f(\pi^m_t))
= -\frac{1-\beta}{2}  + (\ln(1-\beta)) f(\pi_t^m) + \frac{f(\pi_t^m)^2}{2(1-\beta)} +  O(\epsilon^3),  \\
&\phi(f(\pi_t^m) - \mu_j f_j(\pi_t^m))
= -\frac{1-2\beta}{2}  + (\ln(1-2\beta)) f(\pi_t^m) + \frac{f(\pi_t^m)^2}{2(1-2\beta)} - \\
&\phantom{=\ } (\ln(1-2\beta)) \mu_j f_j(\pi_t^m) + \frac{(\mu_j f_j(\pi_t^m))^2}{2(1-2\beta)} - \frac{\mu_j f_j(\pi_t^m) f(\pi_t^m)}{1-2\beta} +  O(\epsilon^3), \ j\neq 0, j\neq m.
\end{align*}
Recall Taylor's theorem with the remainder in Lagrange form says that the error term $O(\epsilon^3)$ in the expansion of $\phi(\mu_0 f_0(\pi_t^m))$ equals $\frac{|\phi^{(3)}(\xi)|}{6} |\mu_0 f_0(\pi_t^m) - (1-k\beta)|^3$ for some $\xi$ between $\mu_0 f_0(\pi_t^m)$ and $1-k\beta$. Since $|\mu_0 f_0(\pi_t^m) - (1-k\beta)| \le 4k\epsilon$, we have $|\xi - (1-k\beta)| \le 4k\epsilon$, hence $\xi \ge (1-k\beta) - 4k\epsilon > 0$ if $\epsilon < \frac{1-k\beta}{4k}$. Furthermore, we have $0 < \frac{1}{(1-k\beta) - 4k\epsilon} \le \frac{2}{1-k\beta}$ as long as $\epsilon \le \frac{1-k\beta}{8k}$. Therefore,
\begin{align*}   \nonumber
\frac{|\phi^{(3)}(\xi)|}{6} |\mu_0 f_0(\pi_t^m) - (1-k\beta)|^3
&= \frac{1}{6\xi^2} |\mu_0 f_0(\pi_t^m) - (1-k\beta)|^3  \\
&\le \frac{1}{6(1-k\beta - 4k\epsilon)^2} (4k\epsilon)^3 \\
&\le \frac{4}{6(1-k \beta)^2} (4k)^3 \epsilon^3 \le \frac{k^{11}}{6(1-k \beta)^2} \epsilon^3,
\end{align*}
when $0 < \epsilon \le \frac{1-k\beta}{k^7} < \frac{1-k\beta}{8k}$.  Denote the constant in this upper bound by $R_1$.

Similarly, let $R_2, R_3$ and $R_4$ denote the constants that we can get as upper bounds of the absolute values of error terms in the expansions of $\phi(\mu_j f_j(\pi^m_t)), \phi(f(\pi^m_t))$ and $\phi(f(\pi_t^m) - \mu_j f_j(\pi_t^m))$, respectively. We have
\[   \label{eqn:Remainders-bound}
\begin{cases}
R_1 
\le \frac{k^{11}}{6(1-k \beta)^2},  &\text{when\ } 0 < \epsilon \le \frac{1-k\beta}{k^7}; \\
R_2 
\le \frac{k^{14}}{6 \beta^2} , &\text{when\ } 0 < \epsilon \le \frac{\beta}{k^7}; \\
R_3 
\le \frac{k^{20}}{6 (1-\beta)^2}, &\text{when\ } 0 < \epsilon \le \frac{1-\beta}{k^7}; \\
R_4 
\le \frac{k^{20}}{6 (1-2\beta)^2}, &\text{when\ } 0 < \epsilon \le \frac{1-2\beta}{k^7}.
\end{cases}
\]

Observe that $\mu_x^0 f_x^0$ and $\mu_x^1 f_x^1$ are both close to $\mu_x f_x$ with distance bounded by $3\epsilon$, hence the corresponding functions $\phi(\mu_x^0 f_x^0)$ and $\phi(\mu_x^1 f_x^1)$ have the same expansions as above, the same holds for functions $\phi(f^0), \phi(f^1)$ and $\phi(f^0 - \mu^0_x f^0_x), \phi(f^1 - \mu^1_x f^1_x)$.

We continue to use the Taylor expansions to expand the concavity conditions \eqref{eq:ExternalGoal} and \eqref{eq:InternalGoal}.

\begin{itemize}
\item Taylor expansion of external concavity condition \eqref{eq:ExternalGoal}.

When $f(\pi_t^m) \neq 0$, we have the following expansion,
\begin{align}  \label{eqn:extExpand}
\begin{split}
&\phi(f(\pi_t^m)) - \sum_x \phi(\mu_x f_x(\pi_t^m)) \\
&= \phi(f(\pi_t^m)) - \phi(\mu_0 f_0(\pi_t^m)) - \sum_{j=1, j\neq m}^k \phi(\mu_j f_j(\pi_t^m)) \\
&= (\ln(1-\beta)) f + \frac{1}{2(1-\beta)} f^2  - (\ln(1-k\beta)) \mu_0 f_0  - \frac{1}{2(1-k\beta)} (\mu_0 f_0)^2 \\
&\phantom{=\ } - \ln\beta \sum_{j=1, j\neq m}^k  \mu_j f_j - \frac{1}{2\beta}  \sum_{j=1, j\neq m}^k (\mu_j f_j)^2 + O(\epsilon^3),
\end{split}
\end{align}
where constant in $O(\epsilon^3)$ can be bounded by $R_1 + (k-1)R_2 + R_3$ according to \eqref{eqn:Remainders-bound}. Using Claim \ref{claim:SameAverage}, we see that the first, third and fifth terms in \eqref{eqn:extExpand} become $0$ in \eqref{eq:ExternalGoal}. Let
\[  \nonumber
F^{\ext}_m (t) = \frac{1}{2(1-\beta)} f^2 - \frac{1}{2(1-k\beta)} (\mu_0 f_0)^2 - \frac{1}{2\beta}  \sum_{j=1, j\neq m}^k (\mu_j f_j)^2.
\]
Define $F_m^{\ext,0}(t)$ and $F_m^{\ext,1}(t)$ with $f$ replaced by $f^0$ and $f^1$, respectively, etc. Observe that $F^{\ext}_m (t) = 0$ when $f(\pi_t^m) = 0$, hence in general $F^{\ext}_m (t)$ is a correct representation of $\phi(f(\pi_t^m)) - \sum_x \phi(\mu_x f_x(\pi_t^m))$. Therefore, what we want to verify in \eqref{eq:ExternalGoal} becomes
\[  \label{eqn:ExtGoalCompute}
\int_{-\gamma_0}^{\gamma_1} \sum_{m=1}^k \left( F^{\ext}_m (t) - \frac{F_m^{\ext,0} (t) + F_m^{\ext,1} (t)}{2} \right) dt + O(\epsilon^4).
\]
As $\gamma_0 + \gamma_1 \le 4\epsilon$, by \eqref{eqn:Remainders-bound}, the constant in $O(\epsilon^4)$ in \eqref{eqn:ExtGoalCompute} can be bounded by
\[   \nonumber
8k(R_1 + (k-1)R_2 + R_3) \le 4 k^{21} \left( \frac{1}{(1-k\beta)^2} + \frac{1}{\beta^2} \right),
\]
when $\epsilon \le \frac{1}{k^7} \min\{\beta, 1-k\beta\}$.

\item Taylor expansion of internal concavity condition \eqref{eq:InternalGoal}.

A direct calculation gives,
\begin{align}  \label{eqn:Int-temp1}
\begin{split}
&\phantom{=} \sum_{j=1}^k \sum_{b=0,1} \phi(f_{x_j=b}(\pi_t^m)) - k \sum_x \phi(\mu_x f_x(\pi_t^m)) \\
&= \phi(f(\pi_t^m)) - k \phi(\mu_0 f_0(\pi_t^m)) \\
&+ \sum_{j=1, j \neq m}^k \Big( \phi(f(\pi_t^m) - \mu_j f_j(\pi_t^m)) - (k-1) \phi(\mu_j f_j(\pi_t^m)) \Big).
\end{split}
\end{align}
As did for the external case, when $f(\pi_t^m) \neq 0$ the above formula expands as follows,
\begin{align}  \label{eqn:intExpand}
\begin{split}
\eqref{eqn:Int-temp1} &= \left(\ln(1-\beta) + (k-2) \ln(1-2\beta) - (k-1) \ln\beta \right) f \\
&\phantom{=\ } + \left( \frac{1}{2(1-\beta)} + \frac{k-3}{2(1-2\beta)} \right) f^2  \\
&\phantom{=\ } + \left( \ln(1-2\beta) + (k-1)\ln\beta - k \ln(1-k\beta) \right) \mu_0 f_0 \\
&\phantom{=\ } - \frac{k}{2(1-k\beta)} (\mu_0 f_0)^2 + \left( \frac{1}{2(1-2\beta)} - \frac{k-1}{2\beta} \right) \sum_{j=1, j\neq m}^k (\mu_j f_j)^2\\
&\phantom{=\ } 
+ \frac{1}{1-2\beta} \mu_0 f_0 f + O(\epsilon^3).
\end{split}
\end{align}
Claim \ref{claim:SameAverage} implies the first and third terms in \eqref{eqn:intExpand} become $0$ in \eqref{eq:InternalGoal}. Let
\begin{align*}
F_m (t)
&= \left( \frac{1}{2(1-\beta)} + \frac{k-3}{2(1-2\beta)} \right) f^2 - \frac{k}{2(1-k\beta)} (\mu_0 f_0)^2  \\
&\phantom{=\ } + \left( \frac{1}{2(1-2\beta)} - \frac{k-1}{2\beta} \right) \sum_{j=1, j\neq m}^k (\mu_j f_j)^2
+ \frac{1}{1-2\beta} \mu_0 f_0 f.
\end{align*}
Define $F_m^0(t)$ and $F_m^1(t)$ similarly. Then $F_m (t)$ is a correct representation for \eqref{eqn:Int-temp1}. Therefore what we want to verify in \eqref{eq:InternalGoal} becomes
\[  \label{eqn:IntGoalCompute}
\int_{-\gamma_0}^{\gamma_1} \sum_{m=1}^k \left( F_m (t) - \frac{F_m^0 (t) + F_m^1 (t)}{2} \right) dt + O(\epsilon^4).
\]
\end{itemize}

\subsubsection{Density functions in explicit form}   \label{sec:multi-party-density-functions}
We continue to calculate functions explicitly that will be used for computing.

\begin{itemize}
\item In the protocol $\pi_\mu^\wedge$.

Consider the interval $t \in [-\gamma_0, 0)$. Let $A = (s-1)(t+\gamma_0) = (s-1)t + (s-1)\gamma_0$, The total active time $\Phi_0(t) = \Phi_j(t) = A$ for $s \le j \le k$, and $\Phi_j(t) = A-(t+\gamma_0)$ for $1 \le j \le s-1$. Hence for $1 \le m \le s-1$, we have,
\[  \nonumber
\mu_j f_j(\pi_t^m) =
\begin{cases}
\mu_0  e^{-A}, &\quad j=0,\\
\mu_j e^{t+\gamma_0} e^{-A} = e^{\gamma_0} \beta e^t e^{-A}, &\quad 1 \le j \le s-1 \text{\ and \ } j \neq m, \\
\mu_j e^{-A} = e^{\gamma_0} \beta e^{-A}, &\quad s \le j \le k, \\
0, &\quad j=m.
\end{cases}
\]
For $m \ge s$, we have $\mu_x f_x (\pi_t^m) = 0$ for all $x$. Therefore when $t \in [-\gamma_0, 0)$,
\[  \nonumber
f(\pi_t^m) =
\begin{cases}
0, &\quad m \ge s, \\
(1 - (s-1)\beta + (s-2)e^{\gamma_0} \beta e^t) e^{-A}, &\quad 1 \le m \le s-1.
\end{cases}
\]

Similarly for the interval $t \in [0, \gamma_1)$, let $B= (s-1)(t+\gamma_0) + (k-s+1)t = kt + (s-1)\gamma_0$,  the total active time is $\Phi_0(t) = B$, $\Phi_j(t) = B-(t+\gamma_0)$ for $1 \le j \le s-1$, and $\Phi_j(t) = B-t$ for $s \le j \le k$. Hence for all $1 \le m \le k$ we have,
\[  \nonumber
\mu_j f_j(\pi_t^m) =
\begin{cases}
\mu_0 e^{-B}, &\quad j=0, \\
\mu_j e^{t+\gamma_0} e^{-B} = e^{\gamma_0} \beta e^t e^{-B}, &\quad 1 \le j \le s-1 \text{\ and\ } j \neq m, \\
\mu_j e^t e^{-B} = e^{\gamma_0} \beta e^t e^{-B}, &\quad s \le j \le k \text{\ and\ } j \neq m, \\
0, &\quad j=m.
\end{cases}
\]
Therefore when $t \in [0, \gamma_1)$,
\[  \nonumber
f(\pi_t^m) = (1-(s-1)\beta  - (k-s+1) e^{\gamma_0} \beta + (k-1) e^{\gamma_0} \beta e^t) e^{-B}.
\]

\item In the protocol $\pi^{\wedge}_{\mu^0}$.

Using results from Section \ref{sec:distribution-mu0-mu1}, we have,
\[  \nonumber
\mu_x^0 f_x^0(\pi_t^m) =
\begin{cases}
(1-\epsilon \zeta)e^{-t} \mu_x f_x(\pi_t^m), &\quad t \in [-\gamma_0, 0),  x_s = 0, m\neq s, \\
(1-\epsilon \zeta) \mu_x f_x(\pi_t^m), &\quad t \in [-\gamma_0, 0), x_s = 1, m\neq s, \\
(1-\epsilon \zeta) \mu_x f_x(\pi_t^m), &\quad t \in [0, \gamma_1).
\end{cases}
\]
For the special case $m=s$ and $t \in [-\gamma_0, 0)$, we have,
\[  \nonumber
\mu_j^0 f_j^0(\pi_t^s) =
\begin{cases}
(1-\epsilon \zeta) \mu_0 e^{-t} e^{-A}, &\quad t \in [-\gamma_0, 0), j=0, \\
(1-\epsilon \zeta) e^{\gamma_0} \beta e^{-A}, &\quad t \in [-\gamma_0, 0), 1 \le j \le s-1, \\
0, &\quad t \in [-\gamma_0, 0), j=s\\
(1-\epsilon \zeta) e^{\gamma_0} \beta e^{-t} e^{-A}, &\quad t \in [-\gamma_0, 0), s+1 \le j \le k.
\end{cases}
\]
Therefore when $t \in [-\gamma_0, 0)$,
\[  \nonumber
f^0(\pi_t^m) =
\begin{cases}
(1-\epsilon \zeta) ((1 - e^{\gamma_0} \beta - (s-1) \beta)e^{-t} + (s-1) e^{\gamma_0} \beta) e^{-A}, &1 \le m \le s, \\
0, &s+1 \le m \le k.
\end{cases}
\]
When $t \in [0, \gamma_1)$, it is simply,
\[  \nonumber
f^0(\pi_t^m) = (1-\epsilon \zeta) f(\pi_t^m).
\]

\item In the protocol $\pi^{\wedge}_{\mu^1}$.

Using results from Section \ref{sec:distribution-mu0-mu1}, when $m=s$, then $\mu_x^1 f_x^1(\pi_t^m) = 0$ for all $x$ for $t \in [-\gamma_0, \gamma_1]$. Therefore $f^1(\pi_t^s) = 0$ for all $t \in [-\gamma_0, \gamma_1]$.

When $m \neq s$, we have,
\[  \nonumber
\mu_x^1 f_x^1(\pi_t^m) =
\begin{cases}
\mu^1_x f_x(\pi_t^m), &\quad t \in [-\gamma_0, 0), \\
(1-\epsilon e^{\gamma_0} \beta) e^t \mu_x f_x(\pi_t^m), &\quad t \in [0, \gamma_1), x_s = 0, \\
(1 + \epsilon \zeta) \mu_x f_x(\pi_t^m), &\quad t \in [0, \gamma_1), x_s = 1.
\end{cases}
\]
Hence when $t \in [-\gamma_0, 0)$, we have $f^1(\pi_t^m) = (1-\epsilon e^{\gamma_0} \beta) f(\pi_t^m) + \epsilon \mu_s f_s(\pi_t^m)$,
and when $t \in [0, \gamma_1)$ we have $f^1(\pi_t^m) = (1-\epsilon e^{\gamma_0} \beta) e^t f(\pi_t^m) + (1+\epsilon\zeta -(1-\epsilon e^{\gamma_0} \beta)e^t) \mu_s f_s(\pi_t^m)$. Plug in $f$ we get, when $t \in [-\gamma_0, 0)$,
\[  \nonumber
f^1(\pi_t^m) =
\begin{cases}
0, &m \ge s, \\
(1 + e^{\gamma_0} \beta(1-\epsilon e^{\gamma_0} \beta) ((s-2)e^t - (s-1)e^{-\gamma_0})) e^{-A}, &1 \le m \le s-1.
\end{cases}
\]
When $t \in [0, \gamma_1)$,
\[  \nonumber
f^1(\pi_t^m) =
\begin{cases}
0, &m = s, \\
(1 + e^{\gamma_0} \beta(1-\epsilon e^{\gamma_0} \beta)((k-2)e^t - (s-1)e^{-\gamma_0} - k + s)) e^t e^{-B}, &m \neq s.
\end{cases}
\]
\end{itemize}

\subsubsection{External information cost}  \label{sec:ExtComputation}
Using Wolfram Mathematica with results from Section \ref{sec:multi-party-Taylor} and \ref{sec:multi-party-density-functions}, we obtain
\[   \label{eqn:Ext-numeric}
\eqref{eqn:ExtGoalCompute} = \frac{(k+5s-6)(1-2\beta)\beta}{12(1-\beta)\ln 2} \epsilon^3 + O(\epsilon^4).
\]
Therefore, using the bound of the error term given in Section \ref{sec:multi-party-Taylor}, one finds $\eqref{eqn:Ext-numeric} > 0$ as long as
\[    \nonumber
\epsilon < \min\left\{\frac{(k+5s-6)(1-2\beta)\beta}{12(1-\beta)\ln 2} {\bigg/} 4 k^{21} \left( \frac{1}{(1-k\beta)^2} + \frac{1}{\beta^2} \right), \frac{1}{k^7} \min\{\beta, 1-k\beta\} \right\}.
\]
Note that $\frac{2}{1/x + 1/y} = \frac{2xy}{x+y} \ge \min\{x, y\}$ for all $x, y > 0$. Simplifying the above formula, one obtains $\eqref{eqn:Ext-numeric} > 0$ as long as
\[   \nonumber
\epsilon < c k^{-20} \min\{\beta, 1-k\beta\}^3,
\]
for some constant $c > 0$. So we have verified the concavity condition \eqref{eq:ExternalGoal} is satisfied for all $\epsilon$-weak signals such that $\epsilon$ is no greater than $c k^{-20} \min\{\beta, 1-k\beta\}^3$.

Let $\mu^E$ denote the distribution in Claim~\ref{claim:protocol-is-opt-when-mu-i-are-all-equal}, we have $|\mu - \mu^E| \le 1-k\beta$. Let $\mu'$ be defined as $\mu'_{s-1} = 0$, $\mu'_{s} = e^{\gamma_0} \beta + \beta$, and $\mu'_j = \mu_j$ for all other $j$, then $|\mu - \mu'| = \beta$.
Observe that $\mu'$ is external trivial, hence $\mu^E, \mu' \in \Omega$ (the $\Omega$ we chose at the beginning of Section \ref{sec:Some-reductions}). Therefore we have $\delta(\mu) \le \min\{\beta, 1-k\beta\}$. Thus as we choose $w(x) = ck^{-20} x^4$, the concavity condition \eqref{eq:ExternalGoal} is satisfied for all $w(\delta(\mu))$-weak signals because
\[   \nonumber
w(\delta(\mu)) \le ck^{-20} \min\{\beta, 1-k\beta\}^4  < c k^{-20} \min\{\beta, 1-k\beta\}^3.
\]
By Theorem~\ref{thm:LocalCharNoErrorWeak}, we have proved the protocol $\pi^\wedge$ in Figure \ref{fig:prot} is optimal for external information cost.

\subsubsection{Internal information cost}  \label{sec:IntComputation}

Similarly, using Wolfram Mathematica, we obtain
\[   \label{eqn:Int-numeric}
\eqref{eqn:IntGoalCompute} =
\begin{cases}
\frac{(k+5s-6)(1-2\beta)\beta}{12(1-\beta)\ln 2} \epsilon^3 + O(\epsilon^4), &\quad k=2, \\
\frac{(k+5s-6)((3k-2)\beta^2 - 4(k-1)\beta + k-1)\beta}{12(1-\beta)(1-2\beta)\ln 2} \epsilon^3 + O(\epsilon^4), &\quad k \ge 3.
\end{cases}
\]
As did in Section \ref{sec:ExtComputation}, one can show \eqref{eqn:Int-numeric} is positive when $\epsilon$ is sufficiently small. And furthermore one can pick an appropriate function $w$ to verify that the concavity condition \eqref{eq:InternalGoal} is satisfied for all $w(\delta(\mu))$-weak signals. Hence by Theorem~\ref{thm:LocalCharNoErrorWeak}, our protocol is optimal for internal information cost.


\bibliographystyle{amsalpha}
\bibliography{MultiAnd}

\end{document}